\newcommand{\A}{{\mathcal A}}
\newcommand{\B}{{\mathcal B}}
\newcommand{\C}{{\mathcal C}}
\newcommand{\G}{{\mathcal G}}
\newcommand{\llb}{{\llbracket}}
\newcommand{\rrb}{{\rrbracket}}
\newcommand{\lbar}{{\overline{\lambda}}}
\newcommand{\wbar}{{\overline{w}}}
\newcommand{\Jbar}{{\overline{J}}}
\newcommand{\vB}{{\sf B}}
\newcommand{\vC}{{\sf C}}
\newcommand{\vg}{{\sf g}}
\newcommand{\vu}{{\sf u}}
\newcommand{\vv}{{\sf v}}
\newcommand{\supp}{{\rm supp}}
\newcommand{\DTS}{{\text{D$\Delta$S}}}
\newcommand{\GDTS}{{\text{GD$\Delta$S}}}
\newcommand{\bbZ}{{\mathbb Z}}
\newcommand{\al}{\alpha}
\newcommand{\be}{\beta}
\newcommand{\lam}{\lambda}
\newcommand{\nin}{\noindent}
\newcommand{\mz}{\mathbb{Z}_q}
\renewcommand{\le}{\leqslant}
\newcommand{\nwq}{(n,2w-1,w)_q}
\newcommand{\nwbq}{(n,2\sum{\overline{w}}-1,\overline{w})_q}
\newcommand{\swb}{\sum \overline{w}}
\newtheorem{corollary}{Corollary}[section]
\newtheorem{definition}{Definition}[section]
\newtheorem{example}{Example}[section]
\newtheorem{lemma}{Lemma}[section]
\newtheorem{proposition}{Proposition}[section]
\newtheorem{theorem}{Theorem}[section]
\newtheorem{problem}{Problem}[section]
\title{Linear Size Optimal $q$-ary Constant-Weight  Codes and Constant-Composition Codes}
\author{    Yeow~Meng~Chee,~\IEEEmembership{Senior Member,~IEEE,}~Son~Hoang~Dau,~Alan~C.~H.~Ling,~and~San~Ling
	\thanks{The research of Y. M. Chee and S. Ling is supported in part by the National Research
	Foundation of Singapore under Research Grant NRF-CRP2-2007-03.
	The research of Y. M. Chee is also supported in part by the Nanyang 
	Technological University under Research
	Grant M58110040.}
	\thanks{Y. M. Chee, S. H. Dau and S. Ling are
		with the Division of Mathematical Sciences,
		School of Physical and Mathematical Sciences,
		Nanyang Technological University,
		21 Nanyang Link,
		Singapore 637371 (email: {\tt ymchee@ntu.edu.sg},
		{\tt daus0001@ntu.edu.sg}, {\tt lingsan@ntu.edu.sg}).}
	\thanks{A. C. H. Ling is with the Department of Computer Science,
	University of Vermont, Burlington, Vermont, USA 05405 (email: {\tt aling@emba.uvm.edu}).}
}
\date{}                                           
\begin{document}
\maketitle

\begin{abstract}
\boldmath
An optimal constant-composition or constant-weight code of weight $w$
has linear size if and only if its distance $d$ is at least
$2w-1$. When $d\geq 2w$, the determination of the exact size of 
such a constant-composition or constant-weight code is trivial, but the case
of $d=2w-1$ has been solved previously only for binary and
ternary constant-composition and constant-weight codes,
and for some sporadic instances. 

This paper provides a construction for quasicyclic 
optimal constant-composition and constant-weight
codes of weight $w$ and distance $2w-1$ based on a new generalization of
difference triangle sets. As a result, the sizes of optimal constant-composition codes
and optimal constant-weight codes
of weight $w$ and distance $2w-1$ are determined for all such codes of
sufficiently large lengths.
This solves an open problem of Etzion.

The sizes of optimal constant-composition codes
of weight $w$ and distance $2w-1$ are also determined for all $w\leq 6$, except in two cases.
\end{abstract}

\begin{keywords}
\boldmath
constant-composition codes, constant-weight codes, difference triangle sets, generalized
Steiner systems, Golomb rulers, quasicyclic codes
\end{keywords}

\section{Introduction}

\PARstart{T}here are two generalizations of binary constant-weight codes as we enlarge the alphabet
beyond size two. These are the classes of constant-composition codes and 
$q$-ary constant-weight codes. 
While a vast amount of knowledge exists for binary constant-weight codes
\cite{MacWilliamsSloane:1977,Brouweretal:1990,Agrelletal:2000,Smithetal:2006}, relatively little is known about constant-composition codes and $q$-ary constant-weight codes. 
Recently, these classes of codes
have attracted  some attention
\cite{Svanstrom:2000,OstergardSvanstrom:2002,Svanstrometal:2002,BogdanovaKapralov:2003,Luoetal:2003,Chuetal:2004,Colbournetal:2004,Chuetal:2005,DingYin:2005a,DingYin:2005b,DingYuan:2005,DingYin:2006,CheeLing:2007,Cheeetal:2007,Cheeetal:2008,Cheeetal:2008b} due to
several important applications requiring nonbinary alphabets, such
as in determining the zero error decision feedback capacity of discrete memoryless channels
\cite{TelatarGallager:1990}, multiple access communications \cite{Dyachkov:1984},
spherical codes for modulation \cite{EricsonZinoviev:1995}, DNA codes
\cite{King:2003,MilenkovicKashyap:2006,CheeLing:2008}, powerline communications
\cite{Chuetal:2004,Colbournetal:2004}, frequency hopping \cite{Chuetal:2006}, and
coding for bandwidth-limited channels \cite{CostelloForney:2007}.

As in the case of binary constant-weight codes, the determination of the maximum size
of a constant-composition code or a $q$-ary constant-weight code of length $n$, given constraints on its distance, weight and/or composition,
constitutes a central problem in their investigation.


The ring $\bbZ/q\bbZ$ is denoted by $\bbZ_q$. For integers $m\leq n$,
the set of integers $\{m,m+1,\ldots,n\}$ is denoted $[m,n]$. The set $[1,n]$
is further abbreviated to $[n]$. 
A {\em partition} is a tuple $\lbar=\llb\lambda_1,\ldots,\lambda_N\rrb$
of integers such that $\lambda_1\geq\cdots\geq\lambda_N\geq 1$. 
The $\lambda_i$'s are the {\em parts} of the partition. Disjoint set union is denoted
by $\sqcup$.

If $X$ and $R$ are sets, $X$ finite, then
$R^X$ denotes the set of vectors of length $|X|$, where each
component of a vector $\vu\in R^X$ has value in $R$ and is indexed by an element of $X$,
that is, $\vu=(\vu_x)_{x\in X}$.
A {\em $q$-ary code of length $n$} is a set $\C\subseteq \bbZ_q^X$, for some
$X$ of size $n$.
The elements of $\C$ are called {\em codewords}.
The {\em support} of a vector $\vu\in \bbZ_q^X$,
denoted $\supp(\vu)$, is the set $\{x\in X: \vu_x\not=0\}$.
The {\em Hamming norm} or {\em weight}
of $\vu\in \bbZ_q^X$ is defined as $\| \vu\| = |{\rm supp}(\vu)|$. The distance
induced by this norm is called the {\em Hamming distance}, denoted $d_H(\cdot,\cdot)$, so
that $d_H(\vu,\vv)=\| \vu-\vv\|$, for $\vu,\vv\in \bbZ_q^X$.
A code $\C$ is said to have {\em distance} $d$ if
$d_H(\vu,\vv)\geq d$ for all distinct $\vu,\vv\in\C$.
The {\em composition} of a vector $\vu\in \bbZ_q^X$ is the tuple
$\overline{w}=\llb w_1,\ldots,w_{q-1}\rrb$, where $w_i=|\{x\in X:\vu_x=i\}|$, $i\in\bbZ_q\setminus\{0\}$.
A code $\C$ is said to have {\em constant weight} $w$ if every codeword in $\C$ has weight $w$,
and is said to have {\em constant composition} $\overline{w}$ if every codeword in $\C$
has composition $\overline{w}$. Hence, every constant-composition code is
a constant-weight code. We refer to a $q$-ary code of length $n$, distance $d$,
and constant weight $w$ as an $(n,d,w)_q$-code. If in addition, the code
has constant composition $\overline{w}$, then it is referred to as an $(n,d,\overline{w})_q$-code.
An $(n,d,w)_2$-code and an $(n,d,\llb w\rrb)_2$-code 
coincide in definition, and are binary constant-weight
codes. 
The maximum size of an $(n,d,w)_q$-code is denoted $A_q(n,d,w)$ and that of an
$(n,d,\overline{w})_q$-code is denoted $A_q(n,d,\overline{w})$. 
Any $(n,d,w)_q$-code or $(n,d,\overline{w})_q$-code attaining the maximum size is called {\em optimal}.

The following operations do not affect distance and composition properties
of an $(n,d,\overline{w})_q$-code:

\begin{enumerate}
\item reordering the components of $\overline{w}$, and
\item deleting zero components of $\overline{w}$.
\end{enumerate}
Consequently, throughout this paper, attention is restricted to those compositions
$\overline{w}=\llb w_1,\ldots,w_{q-1}\rrb$, where $w_1\geq \cdots\geq w_{q-1}\geq 1$,
that is, $\overline{w}$ is a partition.
For succinctness, the sum $\sum_{i=1}^{q-1} w_i$
of all the parts
of a partition $\overline{w}=\llb w_1,\ldots,w_{q-1}\rrb$ is denoted by $\sum\overline{w}$.

The focus of this paper is on determining $A_q(n,d,w)$ and $A_q(n,d,\overline{w})$ for those $d$, $w$
and $\overline{w}$ for which $A_q(n,d,w)=O(n)$ and $A_q(n,d,\overline{w})=O(n)$. 

The Johnson-type bound of Svanstr\"om for ternary constant-composition
codes \cite[Theorem 1]{Svanstrom:2000} extends easily to the following
(see also \cite[Proposition 1.3]{Chuetal:2006}):

\vskip 10pt
\begin{proposition}[Johnson Bound]
\label{johnson}
\begin{align*}
A_q(n,d,\llb w_1,w_2,&\ldots,w_{q-1}\rrb) \leq \\
& \left\lfloor \frac{n}{w_1}A_q(n-1,d,\llb w_1-1,w_2,\ldots,w_{q-1}\rrb) \right\rfloor.
\end{align*}
\end{proposition}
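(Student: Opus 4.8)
The plan is to prove this Johnson-type bound by a standard double-counting (averaging) argument over the coordinate positions, exploiting the fact that every codeword uses the symbol $1$ in exactly $w_1$ of the $n$ coordinates. Let $\C$ be an optimal $(n,d,\llb w_1,\ldots,w_{q-1}\rrb)_q$-code, so that $|\C|=A_q(n,d,\llb w_1,\ldots,w_{q-1}\rrb)$. For each coordinate $x\in X$ (with $|X|=n$), I would let $\C_x$ denote the set of codewords $\vu\in\C$ with $\vu_x=1$, and the idea is to show that deleting coordinate $x$ from each codeword in $\C_x$ yields a code of the reduced composition on a shorter length.

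First I would verify the counting identity. Since each codeword of $\C$ has exactly $w_1$ coordinates equal to $1$ (because its composition is $\llb w_1,\ldots,w_{q-1}\rrb$), summing over all coordinates gives
\begin{equation*}
\sum_{x\in X}|\C_x| = w_1\,|\C|.
\end{equation*}
By averaging, there exists some coordinate $x^*\in X$ with $|\C_{x^*}|\geq \frac{w_1}{n}|\C|$, equivalently $|\C|\leq \frac{n}{w_1}|\C_{x^*}|$.

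Next I would analyze the punctured code. Deleting the coordinate $x^*$ from every codeword in $\C_{x^*}$ produces a set $\C'$ of vectors in $\bbZ_q^{X\setminus\{x^*\}}$, a code of length $n-1$. Each such vector had a $1$ removed, so its composition becomes $\llb w_1-1,w_2,\ldots,w_{q-1}\rrb$; thus $\C'$ has constant composition $\llb w_1-1,w_2,\ldots,w_{q-1}\rrb$. The key point to check is that distance and cardinality are preserved: for distinct $\vu,\vv\in\C_{x^*}$ we have $\vu_{x^*}=\vv_{x^*}=1$, so the deleted coordinate contributed $0$ to $d_H(\vu,\vv)$, whence the punctured words $\vu',\vv'$ satisfy $d_H(\vu',\vv')=d_H(\vu,\vv)\geq d$. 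In particular $\vu'\neq\vv'$, so puncturing is injective and $|\C'|=|\C_{x^*}|$, and $\C'$ is an $(n-1,d,\llb w_1-1,w_2,\ldots,w_{q-1}\rrb)_q$-code. Hence $|\C_{x^*}|=|\C'|\leq A_q(n-1,d,\llb w_1-1,w_2,\ldots,w_{q-1}\rrb)$.

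Combining the two inequalities gives $|\C|\leq \frac{n}{w_1}A_q(n-1,d,\llb w_1-1,w_2,\ldots,w_{q-1}\rrb)$, and since the left side is an integer, the floor may be applied to yield the stated bound. The argument is essentially routine; the only point requiring a moment of care is confirming that the reduced tuple $\llb w_1-1,w_2,\ldots,w_{q-1}\rrb$ is indeed the correct composition (noting that $w_1-1$ may now be smaller than $w_2$, but since $A_q(\cdot,\cdot,\overline{w})$ is invariant under reordering of the parts of $\overline{w}$ as observed earlier, this causes no difficulty), and that the injectivity of puncturing relies precisely on the distance condition $d\geq 1$. I do not anticipate a genuine obstacle here, as this is the direct $q$-ary analogue of the classical Johnson bound for binary constant-weight codes.
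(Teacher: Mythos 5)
Your proof is correct: the counting identity $\sum_{x\in X}|\C_x|=w_1|\C|$, the choice of a coordinate achieving at least the average, the distance- and size-preserving puncturing (valid because all codewords of $\C_{x^*}$ agree at $x^*$), and the final floor (since $|\C|$ is an integer) are all sound, and your remarks about reordering the reduced composition handle the only notational subtlety. The paper itself gives no proof of this proposition --- it cites it as an easy extension of Svanstr\"om's ternary bound --- and the standard double-counting-plus-puncturing argument you give is exactly the one intended.
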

\vskip 10pt

The following Johnson-type bound for $q$-ary constant-weight codes was established
in \cite[Theorem 10]{OstergardSvanstrom:2002}.

\vskip 10pt
\begin{proposition}[Johnson Bound]
\label{johnson1}
\begin{align*}
A_q(n,d,w) \leq
& \left\lfloor \frac{n(q-1)}{w}A_q(n-1,d,w-1) \right\rfloor.
\end{align*}
\end{proposition}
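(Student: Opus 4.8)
The plan is to establish the bound by a double-counting (averaging) argument over an optimal code, mirroring the classical Johnson bound for binary constant-weight codes. Fix an optimal $(n,d,w)_q$-code $\C\subseteq\bbZ_q^X$, so that $|\C|=A_q(n,d,w)$ and $|X|=n$. For each coordinate $x\in X$ and each nonzero symbol $a\in\bbZ_q\setminus\{0\}$, I would introduce the subcode
\[
\C_{x,a}=\{\vu\in\C:\vu_x=a\},
\]
consisting of those codewords that carry the value $a$ in position $x$. The goal is to bound $|\C_{x,a}|$ from above by $A_q(n-1,d,w-1)$ for every such pair, and then to sum these inequalities in a way that recovers $|\C|$ weighted by $w$.

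The key structural step is to show that puncturing coordinate $x$ turns $\C_{x,a}$ into a legitimate $(n-1,d,w-1)_q$-code. Let $\vu'$ denote the vector of length $n-1$ obtained from $\vu\in\C_{x,a}$ by deleting the entry in position $x$. Since every codeword of $\C_{x,a}$ agrees on coordinate $x$ (all equal to $a$), for distinct $\vu,\vv\in\C_{x,a}$ the deleted coordinate contributes nothing to their Hamming distance, so $d_H(\vu',\vv')=d_H(\vu,\vv)\geq d$; thus the punctured code retains distance at least $d$. Moreover, since $a\neq 0$, the deleted coordinate lay in the support of each codeword, so every punctured codeword has weight exactly $w-1$. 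Hence $\{\vu':\vu\in\C_{x,a}\}$ is an $(n-1,d,w-1)_q$-code, and by the definition of $A_q$ we obtain $|\C_{x,a}|\leq A_q(n-1,d,w-1)$.

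It then remains to count $\sum_{x\in X}\sum_{a\neq 0}|\C_{x,a}|$ in two ways. Summing the per-pair bound gives the upper estimate $n(q-1)\,A_q(n-1,d,w-1)$, since there are $n$ choices of $x$ and $q-1$ choices of $a$. On the other hand, interchanging the order of summation counts, for each codeword $\vu\in\C$, the number of pairs $(x,a)$ with $a\neq 0$ and $\vu_x=a$; this number is precisely $\|\vu\|=w$, because each nonzero coordinate of $\vu$ contributes exactly one such pair. Therefore the double sum equals $w\,|\C|=w\,A_q(n,d,w)$, and combining the two evaluations yields $w\,A_q(n,d,w)\leq n(q-1)\,A_q(n-1,d,w-1)$. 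Dividing by $w$ and using the integrality of $A_q(n,d,w)$ to take the floor completes the argument. The only point requiring genuine care is the puncturing step: one must verify that deleting the common coordinate preserves distance and reduces weight by exactly one. Everything else is routine bookkeeping of the averaging, and I expect no further obstacle.
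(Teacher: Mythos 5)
Your proof is correct. The paper itself gives no proof of this proposition --- it simply cites \cite[Theorem 10]{OstergardSvanstrom:2002} --- and your puncturing-plus-averaging argument (restrict to the subcode $\C_{x,a}$ of codewords with symbol $a\neq 0$ in position $x$, puncture that coordinate to get an $(n-1,d,w-1)_q$-code, then double-count the pairs $(x,a)$ over all codewords to get $w\,|\C|\leq n(q-1)\,A_q(n-1,d,w-1)$ and take floors by integrality) is precisely the standard derivation of this Johnson-type bound, as in the cited reference.
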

\vskip 10pt

\begin{definition}[Refinement]
A partition $\overline{w}=\llb w_1,\ldots,w_q\rrb$ is a {\em refinement} of
$\overline{v}=\llb v_1,\ldots,v_{q'}\rrb$ (written $\overline{w}\succcurlyeq \overline{v}$)
if there exist pairwise disjoint sets $I_1,\ldots,I_{q'}\subseteq [q]$ satisfying
$\cup_{j\in[q']} I_j = [q]$ such that
$\sum_{i\in I_j} w_i=v_j$ for each $j\in[q']$.
\end{definition}
\vskip 10pt

Chu {\em et al.} \cite{Chuetal:2006} made the following observation.

\vskip 10pt
\begin{lemma}
\label{refinement}
If $\overline{w}\succcurlyeq\overline{v}$, then
$A_q(n,d,\overline{w}) \geq A_{q'}(n,d,\overline{v})$.
\end{lemma}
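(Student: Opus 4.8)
The plan is to exhibit, from an optimal code of composition $\overline{v}$ over the smaller alphabet, a code of composition $\overline{w}$ over the larger alphabet with the same number of codewords, the same length, and distance at least $d$. The key structural fact is that $\overline{w}\succcurlyeq\overline{v}$ means $\overline{w}$ is the \emph{finer} partition, so passing from the $\overline{v}$-code to a $\overline{w}$-code amounts to ``splitting'' symbols rather than losing information. Concretely, I would fix an optimal $(n,d,\overline{v})_{q'}$-code $\C'\subseteq\bbZ_{q'}^X$ with $|\C'|=A_{q'}(n,d,\overline{v})$, together with the pairwise disjoint sets $I_1,\ldots,I_{q'}$ witnessing $\overline{w}\succcurlyeq\overline{v}$, which partition the nonzero symbols of $\bbZ_q$ and satisfy $\sum_{i\in I_j}w_i=v_j$.

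Next I would build a code $\C\subseteq\bbZ_q^X$ by refining each codeword. For $\vu'\in\C'$, form $\vu\in\bbZ_q^X$ by keeping $\vu_x=0$ wherever $\vu'_x=0$, and, for each nonzero symbol $j$, distributing the symbols of $I_j$ among the $v_j$ coordinates on which $\vu'$ equals $j$ so that each $i\in I_j$ is used on exactly $w_i$ of them; any fixed deterministic rule (say, in the global order on $X$) will do. Because $\sum_{i\in I_j}w_i=v_j$ this is always possible, and summing over $j$ shows that $\vu$ has composition exactly $\overline{w}$. Set $\C=\{\vu:\vu'\in\C'\}$.

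It then remains to check injectivity and distance, and this is where the disjointness of the $I_j$ does all the work. Define the merging map $\psi\colon\bbZ_q\to\bbZ_{q'}$ by $\psi(0)=0$ and $\psi(i)=j$ for the unique $j$ with $i\in I_j$; this is a genuine (single-valued) function precisely because the $I_j$ are pairwise disjoint and cover all nonzero symbols. By construction $\psi(\vu_x)=\vu'_x$ for every $x$, so $\vu'$ is recovered from $\vu$, giving injectivity and hence $|\C|=|\C'|$. For the distance, if $\vu,\vv\in\C$ agree at a coordinate $x$, then $\vu'_x=\psi(\vu_x)=\psi(\vv_x)=\vv'_x$; contrapositively, every coordinate where $\vu'$ and $\vv'$ differ is a coordinate where $\vu$ and $\vv$ differ, so $d_H(\vu,\vv)\ge d_H(\vu',\vv')\ge d$. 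Thus $\C$ is an $(n,d,\overline{w})_q$-code, and $A_q(n,d,\overline{w})\ge|\C|=A_{q'}(n,d,\overline{v})$.

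I do not expect a genuine obstacle in this argument; the only point requiring care is that the refining operation on individual symbols is one-to-many, so it is not itself a function and cannot directly be used to compare codewords. The trick is to reason through the \emph{inverse} merging map $\psi$, which is an honest function and simultaneously delivers both injectivity and distance non-decrease. A pleasant consequence of routing everything through $\psi$ is that the particular rule chosen to split the $v_j$ coordinates among the symbols of $I_j$ is immaterial, since every step depends only on the coarse data $\psi(\vu_x)=\vu'_x$.
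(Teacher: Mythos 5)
Your proof is correct. Note that the paper itself gives no proof of this lemma---it is stated as an observation attributed to Chu \emph{et al.}---so there is nothing internal to compare against; your symbol-splitting construction, with injectivity and the distance bound both routed through the merging map $\psi$, is precisely the standard argument behind this fact and fills the gap cleanly.
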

\vskip 10pt

Given $q$ and $w$, the condition for $A_q(n,d,\overline{w})=O(n)$ to hold can be characterized as follows.

\vskip 10pt
\begin{proposition}
$A_q(n,d,\overline{w})=O(n)$ if and only if $d\geq 2\sum\overline{w}-1$.
\end{proposition}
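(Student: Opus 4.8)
The plan is to prove the two directions separately, writing $w=\sum\overline{w}$ for the common weight of the codewords and reducing everything to the borderline distances by monotonicity: since $A_q(n,d,\overline{w})$ is non-increasing in $d$, it suffices to show the count is linear already at $d=2w-1$ and superlinear already at $d=2w-2$.

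For the ``if'' direction I would assume $d\geq 2w-1$ and first translate the distance constraint into combinatorics. Given distinct codewords $\vu,\vv$ with supports $S,T$ of size $w$, set $t=|S\cap T|$ and let $a$ be the number of coordinates in $S\cap T$ at which $\vu$ and $\vv$ agree; a direct count gives $d_H(\vu,\vv)=2w-t-a$. Thus $d\geq 2w-1$ forces $t+a\leq 1$, so in particular $a=0$ always, i.e.\ no two codewords carry the same nonzero value at a common coordinate. The second step is a double count: fixing $x\in X$ and letting $r_x$ be the number of codewords whose support contains $x$, these $r_x$ codewords take pairwise distinct nonzero values at $x$, whence $r_x\leq q-1$. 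Summing over all coordinates yields $w|\C|=\sum_x r_x\leq (q-1)n$, so $|\C|\leq (q-1)n/w=O(n)$. (This recovers a base case of the Johnson bound in Proposition~\ref{johnson1}.)

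For the ``only if'' direction I would argue the contrapositive and exhibit superlinear codes when $d\leq 2w-2$; by monotonicity it is enough to handle $d=2w-2$. Here I would invoke Lemma~\ref{refinement}: collapsing all the parts of $\overline{w}$ into a single block shows $\overline{w}\succcurlyeq\llb w\rrb$, so $A_q(n,2w-2,\overline{w})\geq A_2(n,2w-2,\llb w\rrb)=A_2(n,2w-2,w)$. It therefore suffices to prove that the binary constant-weight number $A_2(n,2w-2,w)$ is superlinear. Re-running the count of the previous paragraph in the binary case, where $a=t$ automatically so $d_H=2(w-t)$, the requirement $d_H\geq 2w-2$ becomes exactly $t\leq 1$; hence such codes are precisely packings of $X$ by $w$-subsets meeting pairwise in at most one coordinate.

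The main obstacle is this last step: I must produce $\omega(n)$ pairwise almost-disjoint $w$-subsets, not merely bound their number. The cleanest route is to cite the classical packing estimate, under which the maximum such packing has size $\Theta(n^2)$ for fixed $w\geq 2$ (asymptotically $\binom{n}{2}/\binom{w}{2}$); if a self-contained argument is preferred, even a crude construction such as the lines of an affine plane, or a greedy/deletion argument, gives $\Omega(n^2)$ blocks, which already defeats $O(n)$. The case $w=1$ is vacuous, since then no distance $d\leq 2w-2=0$ occurs.
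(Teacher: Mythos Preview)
Your proposal is correct and structurally identical to the paper's proof: both directions hinge on the Johnson bound for the linear upper bound and on the refinement lemma (Lemma~\ref{refinement}) plus a binary packing lower bound for superlinearity. The differences are in execution rather than strategy. For the ``if'' direction the paper simply cites Proposition~\ref{johnson}, whereas you reprove the base case $A_q(n,2w-1,w)\leq (q-1)n/w$ by the $a=0$ observation and the pigeonhole double count; this is exactly the argument underlying Proposition~\ref{johnson1} and buys self-containment at no real cost. For the ``only if'' direction the paper invokes R\"odl's resolution of the Erd\H{o}s--Hanani conjecture to get $A_2(n,d,w)=(1-o(1))\binom{n}{w-d/2+1}/\binom{w}{w-d/2+1}$ for all $d\leq 2w-2$ at once, while you first reduce to $d=2w-2$ by monotonicity and then only need the far weaker statement that partial Steiner systems $S(2,w,n)$ have $\Omega(n^2)$ blocks, for which the greedy/deletion argument you sketch is already sufficient and avoids the heavy machinery.
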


\begin{proof}
$A_q(n,d,\overline{w})=O(n)$ when $d\geq 2\sum\overline{w}-1$ follows easily from the Johnson bound.

R\"{o}dl's proof
\cite{Rodl:1985}
of the Erd\H{o}s-Hanani conjecture \cite{ErdosHanani:1963} implies that
$A_2(n,d,w)=(1-o(1))\binom{n}{w-d/2+1}/\binom{w}{w-d/2+1}$, so that
$A_2(n,d,w)=\Omega(n^2)$ for all $d\leq 2w-2$.
Therefore, by Lemma \ref{refinement},
$A_q(n,d,\overline{w})\geq A_2(n,d,\sum\overline{w})=\Omega(n^2)$
for all $d\leq 2\sum\overline{w}-2$.
\end{proof}
\vskip 10pt

A similar proof yields:
\vskip 10pt
\begin{proposition}
$A_q(n,d,w)=O(n)$ if and only if $d\geq 2w-1$.
\end{proposition}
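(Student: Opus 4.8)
The plan is to mirror the proof of the preceding proposition, treating the two directions separately. For the ``if'' direction I would show that $d\geq 2w-1$ forces $A_q(n,d,w)=O(n)$ by a single application of the Johnson bound of Proposition~\ref{johnson1},
\begin{align*}
A_q(n,d,w)\leq\left\lfloor\frac{n(q-1)}{w}A_q(n-1,d,w-1)\right\rfloor.
\end{align*}
First I would observe that any two weight-$(w-1)$ vectors can differ only on the union of their supports, a set of size at most $2(w-1)$, so their Hamming distance never exceeds $2(w-1)$. Since $d\geq 2w-1>2(w-1)$, no $(n-1,d,w-1)_q$-code can contain two distinct codewords, whence $A_q(n-1,d,w-1)\leq 1$. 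Substituting this into the bound gives $A_q(n,d,w)\leq\lfloor n(q-1)/w\rfloor$, which is $O(n)$.

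For the ``only if'' direction I would argue the contrapositive, showing that $d\leq 2w-2$ forces $A_q(n,d,w)=\Omega(n^2)$. The key observation is that every binary constant-weight code embeds into the $q$-ary setting: viewing $\{0,1\}$ as a subset of $\bbZ_q$, an $(n,d,w)_2$-code is already an $(n,d,w)_q$-code with identical weight and distance, so $A_q(n,d,w)\geq A_2(n,d,w)$. (This plays the role that Lemma~\ref{refinement} played in the constant-composition proposition.) Invoking R\"odl's resolution of the Erd\H{o}s--Hanani conjecture exactly as in the previous proof, $A_2(n,d,w)=\Omega(n^2)$ for every $d\leq 2w-2$, and therefore $A_q(n,d,w)=\Omega(n^2)$ as well, which rules out $O(n)$.

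The main obstacle---indeed the only subtlety---is ensuring that the Johnson bound actually terminates at a constant rather than merely reducing the problem to a smaller one. The point to get right is that lowering the weight from $w$ to $w-1$ while keeping the distance at $d\geq 2w-1$ pushes the required distance strictly above the trivial maximum $2(w-1)$ attainable between two weight-$(w-1)$ vectors, collapsing the reduced code to a single codeword. This is precisely what converts the recursive inequality into a linear upper bound in one step.
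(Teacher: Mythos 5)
Your proof is correct and takes essentially the same approach as the paper, which handles this proposition by remarking that ``a similar proof'' to the constant-composition case applies: one application of the Johnson bound (Proposition~\ref{johnson1}) collapses the reduced code to a single codeword for the $O(n)$ direction, and R\"odl's theorem gives the $\Omega(n^2)$ lower bound when $d\leq 2w-2$. Your substitution of the direct embedding $A_q(n,d,w)\geq A_2(n,d,w)$ for Lemma~\ref{refinement} is exactly the natural adaptation the paper intends.
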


\subsection{Problem Status and Contribution}

For constant-composition codes, it is trivial to see that 
\begin{equation*}
A_q(n,d,\overline{w})=
\begin{cases}
1,&\text{if $d\geq 2\sum\overline{w}+1$} \\
\lfloor n/\sum\overline{w} \rfloor,&\text{if $d=2\sum\overline{w}$.}
\end{cases}
\end{equation*}

When $d=2\sum\overline{w}-1$, our knowledge of $A_q(n,d,\overline{w})$ is limited. We know that
$A_2(n,2w-1,w)=A_2(n,2w,w)=\lfloor n/w\rfloor$, trivially.
$A_3(n,2\sum\overline{w}-1,\overline{w})$ has also been completely determined by
Svanstr\"om {\em et al.} \cite{Svanstrometal:2002}. In particular, 
$A_3(n,2\sum\overline{w}-1,\overline{w})=\lfloor n/w_1\rfloor$ holds for all $n$ sufficiently large.
Beyond this (for $q\geq 4$), $A_q(n,2\sum\overline{w}-1,\overline{w})$ has not been determined,
except in one instance: $A_4(n,5,\llb 1,1,1\rrb)=n$ for $n\geq 7$,
established by Chee {\em et al.} \cite{Cheeetal:2007}. 

For constant-weight codes, we have
\begin{equation*}
A_q(n,d,w)=
\begin{cases}
1,&\text{if $d\geq 2w+1$} \\
\lfloor n/w \rfloor,&\text{if $d=2w$.}
\end{cases}
\end{equation*}
An explicit formula for $A_3(n,2w-1,w)$ has been obtained by \"Osterg{\aa}rd and Svanstr\"om
\cite{OstergardSvanstrom:2002}.
When $q \geq 4$, the value of $A_q(n,2w-1,w)$ is not known. 

The main contribution of this paper is the following two results.

\vskip 10pt
{\em Main Theorem 1:}
\label{main1}
Let $\wbar=\llb w_1,\ldots,w_{q-1}\rrb$. Then
$A_q(n,2\sum\overline{w}-1,\overline{w})=\lfloor n/w_1\rfloor$
for all sufficiently large $n$.
\vskip 10pt

{\em Main Theorem 2:}
\label{main2}
$A_q(n,2w-1,w)= (q-1)n/w$ for all sufficiently large $n$ satisfying $w|(q-1)n$.
\vskip 10pt

\noindent In particular, Main Theorem 2 solves an open problem of 
Etzion concerning generalized Steiner systems \cite[Problem 7]{Etzion:1997}.

The optimal constant-weight and constant-composition codes constructed in the proofs of 
Main Theorem 1 and Main Theorem 2
 are quasicyclic, and are obtained from difference triangle sets and their generalization.

\section{Quasicyclic Codes}

A code is {\em quasicyclic} if there exists an $\ell$ such that
a cyclic shift of a codeword by $\ell$ places is another codeword.
More formally, let $X=\bbZ_n$ and define on $\bbZ_q^X$ the
{\em cyclic shift operator} $T:(\vu_x)_{x\in X}\mapsto(\vu_{x-1})_{x\in X}$.
A $q$-ary code $\C\subseteq \bbZ_q^X$ of length $n$ is {\em quasicyclic}
(or more precisely, {\em $\ell$-quasicyclic}) if
it is invariant under $T^\ell$ for some integer $\ell\in[n]$. If
$\ell=1$, such a code is just a cyclic code.

The following two conditions are necessary and sufficient
for a code $\C$ of constant weight $w$ to have distance $2w-1$.
\begin{description}[\setlabelwidth{(C2)}\usemathlabelsep]
\item[(C1)] For any distinct $\vu,\vv\in\C$, $|\supp(\vu)\cap\supp(\vv)|\leq 1$.
\label{C1}
\item[(C2)] For any distinct $\vu,\vv\in\C$, if $x\in \supp(\vu)\cap\supp(\vv)$, 
then $\vu_x\not=\vv_x$.
\label{C2}
\end{description}

\subsection{Quasicyclic Constant-Composition Codes}

The strategy for proving Main Theorem 1 is to construct optimal
$(n,2\sum\overline{w}-1,\overline{w})_q$-codes (meeting the Johnson bound)
that are $w_1$-quasicyclic when
$n\equiv 0\pmod{w_1}$. Optimal $(n,2\sum\overline{w}-1,\overline{w})_q$-codes,
$n\not\equiv 0\pmod{w_1}$, can be obtained easily from those with $n\equiv 0\pmod{w_1}$
by lengthening, as in the lemma below.

\vskip 10pt
\begin{lemma}[Lengthening]
\label{lengthen}
If $A_q(n,2\sum\overline{w}-1,\overline{w})=\lfloor n/w_1\rfloor$ and $n\equiv 0\pmod{w_1}$, then
$A_q(n+i,2\sum\overline{w}-1,\overline{w})=\lfloor n/w_1\rfloor$ for all $i$,
$0\leq i<w_1$.
\end{lemma}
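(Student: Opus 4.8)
The plan is to sandwich $A_q(n+i,2\sum\overline{w}-1,\overline{w})$ between $\lfloor n/w_1\rfloor$ from below and $\lfloor(n+i)/w_1\rfloor$ from above, and then to invoke the divisibility hypothesis $n\equiv 0\pmod{w_1}$ only at the very end, to collapse the upper floor back to $\lfloor n/w_1\rfloor$. Both bounds are elementary, so the argument is essentially bookkeeping; the role of the hypothesis is confined to that final collapse.

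For the lower bound I would make the word ``lengthening'' literal. Let $\C\subseteq\bbZ_q^X$, $|X|=n$, be an optimal $(n,2\sum\overline{w}-1,\overline{w})_q$-code supplied by the hypothesis, so that $|\C|=\lfloor n/w_1\rfloor$. Fix a set $X'\supseteq X$ with $|X'|=n+i$, and send each $\vu\in\C$ to the vector $\vu'\in\bbZ_q^{X'}$ that agrees with $\vu$ on $X$ and is $0$ on $X'\setminus X$. Appending zeros alters neither the support size, nor the composition, nor any pairwise Hamming distance, so the image is an $(n+i,2\sum\overline{w}-1,\overline{w})_q$-code of the same cardinality, whence $A_q(n+i,2\sum\overline{w}-1,\overline{w})\ge\lfloor n/w_1\rfloor$.

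For the upper bound I would apply Proposition~\ref{johnson} a single time at length $n+i$. Reducing the largest part produces the composition $\overline{w}'=\llb w_1-1,w_2,\ldots,w_{q-1}\rrb$ with $\sum\overline{w}'=\sum\overline{w}-1$, and the distance obeys $2\sum\overline{w}-1=2\sum\overline{w}'+1\ge 2\sum\overline{w}'+1$. By the trivial evaluation of constant-composition codes whose distance is at least $2\sum\overline{w}'+1$, namely that such a code has at most one codeword, we get $A_q(n+i-1,2\sum\overline{w}-1,\overline{w}')\le 1$. Substituting into Proposition~\ref{johnson} yields $A_q(n+i,2\sum\overline{w}-1,\overline{w})\le\lfloor(n+i)/w_1\rfloor$.

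Finally I would chain the two bounds and clear the floor. Writing $n=kw_1$ with $k=n/w_1$, the constraint $0\le i<w_1$ gives $(n+i)/w_1=k+i/w_1\in[k,k+1)$, so $\lfloor(n+i)/w_1\rfloor=k=\lfloor n/w_1\rfloor$, and the sandwich forces equality throughout. The only place where care is genuinely needed, and the only place where $n\equiv 0\pmod{w_1}$ is used, is exactly this collapse: without divisibility, adjoining $i<w_1$ coordinates could push $(n+i)/w_1$ across an integer and strictly enlarge the floor, destroying the match. As a self-contained alternative to the Johnson step, I note that the upper bound also follows directly from condition~(C2): the $w_1$ coordinates carrying the symbol $1$ are pairwise disjoint across codewords, since two codewords sharing such a coordinate would both take value $1$ there, contradicting~(C2); hence the codewords carve at most $n+i$ coordinates into disjoint blocks of size $w_1$.
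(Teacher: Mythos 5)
Your proof is correct and follows essentially the same route as the paper's: append zero coordinates to an optimal length-$n$ code for the lower bound, then use the Johnson bound of Proposition~\ref{johnson} (with the trivial evaluation $A_q(n+i-1,2\sum\overline{w}-1,\llb w_1-1,w_2,\ldots,w_{q-1}\rrb)\le 1$) to get $A_q(n+i,2\sum\overline{w}-1,\overline{w})\le\lfloor (n+i)/w_1\rfloor=\lfloor n/w_1\rfloor$, the paper merely leaving implicit the Johnson step you spell out. Your closing alternative via condition (C2) --- the symbol-$1$ supports of distinct codewords are pairwise disjoint, so $|\C|\,w_1\le n+i$ --- is a valid, more self-contained replacement for the Johnson step, but it does not alter the structure of the argument.
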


\begin{proof}
Let $\C\subseteq\bbZ_q^X$ be an $(n,2\sum\overline{w}-1,\overline{w})_q$-code
of size $\lfloor n/w_1\rfloor$. Let $X'=X\cup\{\infty_1,\ldots,\infty_i\}$, where
$\infty_1,\ldots,\infty_i\not\in X$, and define $\C'\subseteq\bbZ_q^{X'}$ such that
$\C'=\{(c(\vu)_x)_{x\in X'}: \vu\in \C\}$, where
\begin{equation*}
c(\vu)_x = 
\begin{cases}
\vu_x,&\text{if $x\in X$} \\
0,&\text{if $x\in \{\infty_1,\ldots,\infty_i\}$.}
\end{cases}
\end{equation*}
Then $\C'$ is an $(n+i,2\sum\overline{w}-1,\overline{w})_q$-code of size $\lfloor n/w_1\rfloor$.
Since $\lfloor (n+i)/w_1\rfloor=\lfloor n/w_1\rfloor$, $\C'$ is optimal by the Johnson bound.
\end{proof}
\vskip 10pt

As opposed to lengthening a code, we can also {\em shorten} a code by selecting a position
$i$, remove those codewords with a nonzero coordinate $i$,
and deleting the $i$th coordinate from every remaining codeword.

Let $n\equiv 0\pmod{w_1}$.
A $w_1$-quasicyclic $(n,2\sum\overline{w}-1,\overline{w})_q$-code $\C$ of size $n/w_1$
can be obtained by {\em developing} a particular vector $\vg\in\bbZ_q^X$:
\begin{equation*}
\C = \{T^{w_1i}(\vg): i\in[0,n/w_1-1]\}.
\end{equation*}
Such a vector $\vg$ is called a {\em base codeword} of the quasicyclic code $\C$.
The remainder of this section develops criteria for a vector $\vg\in\bbZ_q^X$
of composition $\overline{w}$ to be a base codeword
of a $w_1$-quasicyclic $(n,2\sum\overline{w}-1,\overline{w})_q$-code $\C$,
$n\equiv 0\pmod{w_1}$.

The conditions (C1) and (C2) may be stated in terms of the base codeword $\vg$ as follows.
\begin{description}[\setlabelwidth{(C3)}\usemathlabelsep]
\item[(C3)] For $w,x,y,z\in\supp(\vg)$ such that $w\not= x$, $y\not=z$, and $\{w,x\}\not=\{y,z\}$,
we have:
\label{C3}
\begin{itemize}
\item if $x-w\equiv 0\pmod{w_1}$, then $2(x-w)\not\equiv 0\pmod{n}$;
\item if $y-w\equiv 0\pmod{w_1}$, then $x-w\not\equiv z-y\pmod{n}$.
\end{itemize}
\item[(C4)] If $\vg_x=\vg_y\not=0$, then $x-y\not\equiv 0\pmod{w_1}$.
\label{C4}
\end{description}

\subsection{Quasicyclic Constant-Weight Codes}

\begin{lemma}
\label{divlemma}
Let $n \geq w > 0$ and $q \geq 2$. Then $w|(q-1)n$ if and only if
there exist positive integers $\al$, $\be$, $\ell$ and $m$ such that
$n = \al \ell$, $w = \be \ell$, and $q - 1 = m\be$.
\end{lemma}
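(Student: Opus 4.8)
The plan is to prove the biconditional in two directions. The reverse direction is routine: given positive integers $\al,\be,\ell,m$ with $n=\al\ell$, $w=\be\ell$, and $q-1=m\be$, I would simply compute $(q-1)n = m\be\cdot\al\ell = m\al\cdot(\be\ell) = (m\al)w$, which exhibits $w$ as a divisor of $(q-1)n$. This uses nothing beyond substitution, so I would dispose of it in a single line.

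The forward direction is where the real content lies. Assume $w\mid(q-1)n$. The natural choice for $\ell$ is the one that simultaneously ``factors out'' the common structure of $n$ and $w$, so I would set $\ell=\gcd(n,w)$. Then define $\al=n/\ell$ and $\be=w/\ell$; these are positive integers by construction, and they immediately give $n=\al\ell$ and $w=\be\ell$ with $\gcd(\al,\be)=1$. The remaining task is to produce the integer $m$ with $q-1=m\be$, i.e.\ to show $\be\mid(q-1)$. This is precisely the step I expect to be the main obstacle, since it is the only place where the hypothesis $w\mid(q-1)n$ is actually consumed.

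To clear that obstacle, I would argue as follows. From $w\mid(q-1)n$ and $w=\be\ell$, $n=\al\ell$ I get $\be\ell\mid(q-1)\al\ell$, and cancelling the positive factor $\ell$ yields $\be\mid(q-1)\al$. Now invoke $\gcd(\al,\be)=1$: by Euclid's lemma (coprimality lets a divisor ``pass through'' a factor it is coprime to), $\be\mid(q-1)$, so there is a positive integer $m$ with $q-1=m\be$. Here positivity of $m$ follows because $q\ge 2$ forces $q-1\ge 1>0$ while $\be\ge 1$. This completes the construction of all four integers and hence the forward direction.

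Finally I would note the mild side conditions: the hypotheses $n\ge w>0$ and $q\ge2$ guarantee that $\ell=\gcd(n,w)$ is a well-defined positive integer and that every quotient formed is a genuine positive integer, so no degenerate cases arise. The entire argument rests on one idea---choosing $\ell=\gcd(n,w)$ to force $\al$ and $\be$ coprime---after which Euclid's lemma does the work; I would therefore keep the write-up short and foreground that coprimality step.
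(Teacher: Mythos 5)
Your proposal is correct and follows essentially the same route as the paper's own proof: take $\ell=\gcd(w,n)$, set $\al=n/\ell$, $\be=w/\ell$ so that $\gcd(\al,\be)=1$, cancel $\ell$ from $\be\ell\mid(q-1)\al\ell$, and conclude $\be\mid(q-1)$ via coprimality. The only difference is cosmetic---you spell out the appeal to Euclid's lemma and the positivity of $m$, which the paper leaves implicit.
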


\begin{proof}
Assume that $w|(q-1)n$. Let $\ell = \gcd(w,n)$, and let $\al = n/\ell$, 
$\be = w/\ell$. Then $\gcd(\al,\be)=1$. Since $w|(q-1)n$, we have $\be \ell | (q-1)\al \ell$.
Hence, $\be | (q-1)$. Now let $m = (q-1)/\be$. 

The converse is obvious.    
\end{proof}
\vskip 10pt

Suppose that $w|(q-1)n$. By Lemma \ref{divlemma},
there exist positive integers $\al$, $\be$, $\ell$ and $m$ such that
$n=\al \ell$, $w=\beta \ell$, and $q-1=m\beta$. Our strategy is to construct 
$\ell$-quasicyclic optimal $\nwq$-codes of size
$(q-1)n/w=mn/\ell$ (meeting the Johnson bound). In other words, we want to find $m$
vectors, $\vg^{(1)},\ldots,\vg^{(m)}\in\mz^X$, each of weight $w$, such that
\begin{equation*}
\C=\{T^{\ell i}(\vg^{(j)}):i\in[0,n/\ell-1] \text{ and } j \in [m]\}
\end{equation*}
\nin is an $\nwq$-code of size $mn/\ell$. The vectors $\vg^{(1)},\ldots,\vg^{(m)}$ are 
referred to as {\em base codewords} of $\C$.

The conditions (C1) and (C2) can be stated in terms of the base codewords $\vg^{(1)},\ldots,\vg^{(m)}$
as follows.
\begin{description}[\setlabelwidth{(C6)}\usemathlabelsep]
\item[(C5)] Let $w,x\in \supp(\vg^{(i)})$ and $y,z\in \supp(\vg^{(j)})$ such
that $w\neq x$, $y\neq z$, and $\left\{ w,x\right\} \neq \left\{ y,z\right\} $ if $j=i$. Then we have:
\label{C5}
\begin{itemize}
\item if $x-w\equiv 0\pmod{\ell}$, then $2(x-w)\not\equiv 0\pmod{n}$;
\item if $y-w\equiv 0\pmod{\ell}$, then $x-w\not\equiv z-y\pmod{n}$.
\end{itemize}
\item[(C6)] If $\vg_{z}^{(j)}=\vg_{y}^{(j)}\neq 0$ and $z\neq y$, then 
$z-y\not\equiv 0\pmod{\ell}$, for all $j\in \lbrack m]$.
\label{C6}
\item[(C7)] If $\vg_{z}^{(i)}=\vg_{y}^{(j)}\neq 0$ ($z$ and $y$ are not
necessarily distinct), then $z-y\not\equiv 0\pmod{\ell}$, for all $i,j\in
\lbrack m]$, $i\neq j$.
\label{C7}
\end{description}

\section{A New Combinatorial Array}

Conditions (C3)--(C4) (respectively, (C5)--(C7)) suggest organizing the
elements of $\supp(\vg)$ (respectively, $\supp(\vg^{(1)}),\ldots,\supp(\vg^{(m)})$) 
of those quasicyclic constant-composition codes (respectively, constant-weight codes) into a 
two-dimensional array, with respect to their congruence class
modulo $w_1$ (respectively, $\ell$) and the value of their corresponding components in $\vg$ 
(respectively, $\vg^{(1)},\ldots,\vg^{(m)}$).

\vskip 10pt
\begin{definition}
\label{Array def}
Let $\lbar=\llb \lambda_1,\ldots,\lambda_N\rrb$ be a
partition.
A {\em $\lbar$-array} is a $\lambda_1\times N$ array $\vB$ with rows indexed by
$i\in [\lambda_1]$ and columns indexed by $j\in[N]$, such that
\begin{description}[\setlabelwidth{(P3)}\usemathlabelsep]
\item[(P1)] each cell is either empty or contains a nonnegative
integer congruent to its row index modulo $\lambda_1$;
\item[(P2)] the number of nonempty cells in column $j$ is $\lambda_j$;
\item[(P3)] if $B_i=\{b_{i,1},\ldots,b_{i,N_i}\}$ is the set of entries in row $i$ of $\vB$,
then the differences $b_{i,j}-b_{i,j'}$, $i\in [N]$, $1\leq j'\not=j\leq N_i$, are
all nonzero and distinct.
\end{description}
\end{definition}
\vskip 10pt

The {\em scope} of $\vB$ is
\begin{equation*}
\begin{split}
\sigma(\vB)=\max_{1\leq i\leq \lambda_1}  ( &\{b_{i,j}-b_{i,j'}: 1\leq j'\not=j\leq N_i\} \\ 
&\cup  \left\{\left\lceil b_{i,j}/2 \right\rceil: j\in[N_i] \right\} ).
\end{split}
\end{equation*}
In particular, if $\lambda_1=\cdots=\lambda_N=\lambda$, then a $\lbar$-array $\vB$
has all cells nonempty, and is referred to as a $(\lambda,N)$-array. 
From the definition, it is easy to see that the entries of a $\lbar$-array are all distinct. 

\vskip 10pt
\begin{example}
\label{322}
A $\llb 3,2,2\rrb$-array of scope 15:
\begin{equation*}
\begin{array}{|c|c|c|}
\hline
1 & 7 & 16 \\
\hline
2 & & 14 \\
\hline
0 & 3 & \\
\hline
\end{array}
\end{equation*}
\end{example}
\vskip 10pt

\begin{example}
\label{24}
A $(2,4)$-array of scope 42:
\begin{equation*}
\begin{array}{|c|c|c|c|}
\hline
19 & 23 & 35 & 61 \\
\hline
0 & 6 & 20 & 30 \\
\hline
\end{array}
\end{equation*}
\end{example}
\vskip 10pt

\begin{proposition}
\label{array}
Let $\overline{w}=\llb w_1,\ldots,w_{q-1}\rrb$.
If there exists a $\overline{w}$-array $\vB$, then there exists a
$w_1$-quasicyclic optimal $(n,2\sum\overline{w}-1,\overline{w})_q$-code
for all $n\equiv 0\pmod{w_1}$, $n\geq 2\sigma(\vB)+1$.
\end{proposition}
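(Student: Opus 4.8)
The plan is to read the $\overline{w}$-array $\vB$ as a recipe for a single base codeword $\vg\in\bbZ_q^X$ with $X=\bbZ_n$, and then appeal to the equivalence between (C1)--(C2) and (C3)--(C4) established above. Concretely, I would reduce every entry of $\vB$ modulo $n$ and declare that a position $x\in\bbZ_n$ lies in $\supp(\vg)$ with $\vg_x=j$ precisely when the integer $x\in[0,n-1]$ is the entry of some cell in column $j$. The first thing to check is well-definedness: since $\sigma(\vB)\ge\lceil b_{i,j}/2\rceil$ for every entry, we get $b_{i,j}\le 2\sigma(\vB)<n$, so all entries already lie in $[0,n-1]$ and, being distinct by Definition~\ref{Array def}, stay distinct modulo $n$. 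By (P2), column $j$ contributes exactly $w_j$ positions of value $j$, so $\vg$ has composition $\overline{w}$; the developed code $\C=\{T^{w_1 i}(\vg):i\in[0,n/w_1-1]\}$ then has composition $\overline{w}$ and (once the distance is seen to be positive) size $n/w_1$.

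It remains to verify that $\vg$ satisfies (C3) and (C4), which is exactly what is needed for $\C$ to have distance $2\sum\overline{w}-1$. Condition (C4) is immediate: by (P1) the nonempty cells of a fixed column occupy distinct rows, hence distinct residue classes modulo $w_1$, so two positions of equal nonzero value cannot be congruent modulo $w_1$. The first clause of (C3) is a size estimate: if $x\equiv w\pmod{w_1}$ then $x$ and $w$ lie in a common row, whence $|x-w|\le\sigma(\vB)$ by the definition of scope, and therefore $0<|2(x-w)|\le 2\sigma(\vB)<n$, so $2(x-w)\not\equiv 0\pmod n$.

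The heart of the argument, and the step I expect to be the main obstacle, is the second clause of (C3). Suppose for contradiction that $x-w\equiv z-y\pmod n$ with $y\equiv w\pmod{w_1}$; rearranging gives $x-z\equiv w-y\pmod n$. Reducing modulo $w_1$ (using $w_1\mid n$ and $w\equiv y$) forces $x\equiv z\pmod{w_1}$, so $x,z$ share a row and $w,y$ share a row. The scope bound then makes both $x-z$ and $w-y$ integers of absolute value at most $\sigma(\vB)$, so their difference is smaller than $n$ in absolute value and the congruence upgrades to the integer equality $x-z=w-y$. If $x=z$ then $w=y$ and $\{w,x\}=\{y,z\}$, contradicting the hypothesis; otherwise $x-z$ and $w-y$ are genuine nonzero within-row differences, and the global distinctness asserted in (P3) forces the underlying ordered pairs to coincide, giving $x=w$ and contradicting $w\ne x$. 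Hence (C3) holds, and $\C$ is an $(n,2\sum\overline{w}-1,\overline{w})_q$-code of size $n/w_1=\lfloor n/w_1\rfloor$. Finally, one application of the Johnson bound (Proposition~\ref{johnson}) together with the trivial value $A_q(n-1,2\sum\overline{w}-1,\llb w_1-1,w_2,\ldots,w_{q-1}\rrb)=1$ yields $A_q(n,2\sum\overline{w}-1,\overline{w})\le\lfloor n/w_1\rfloor$, so the constructed code is optimal, completing the proof.
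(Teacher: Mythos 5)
Your proposal is correct and takes essentially the same route as the paper: the identical base codeword (set $\vg_x=j$ exactly when $x$ is an entry of column $j$), verification of (C3)--(C4), and optimality from the Johnson bound. The paper's own proof merely asserts that (C3) and (C4) hold for this $\vg$, whereas you spell out the verification in full; your case analysis closely mirrors the argument the paper does write out for the constant-weight analogue in Proposition \ref{Array and code}.
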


\begin{proof}
Let $\vB$ be a $\overline{w}$-array and let $C_j$ denote the set of entries in
column $j$ of $\vB$, $j\in[q-1]$. 
Define a vector $\vg\in\bbZ_q^{\bbZ_n}$, $n\geq 2\sigma(\vB)+1$, as follows:
\begin{equation*}
\vg_x = \begin{cases}
j,&\text{if $x\in C_j$} \\
0,&\text{otherwise}.
\end{cases}
\end{equation*}
Then $\vg$ has composition $\overline{w}$ and satisfies conditions (C3) and (C4).
Therefore $\vg$ is a base codeword of a $w_1$-quasicyclic optimal 
$(n,2\sum\overline{w}-1,\overline{w})_q$-code.
\end{proof}
\vskip 10pt

\begin{example}
The $\llb 3,2,2\rrb$-array in Example \ref{322} gives the base codeword
\begin{equation*}
\vg = 111200020000003030^{n-17}
\end{equation*}
for a 3-quasicyclic optimal $(n,13,\llb 3,2,2\rrb)_4$-code when $n\equiv 0\pmod{3}$, $n\geq 33$.
\end{example}
\vskip 10pt

\begin{proposition}
\label{Array and code}Suppose that $w=\beta \ell$ and $q-1=m\beta$. If there
exists an $(\ell,q-1)$-array $\vB$, 
then there exists an $\ell$-quasicyclic optimal $(n,2w-1,w)_{q}$-code of
size $(q-1)n/w=mn/\ell$, provided that $\ell |n$ and $n\geq 2\sigma(\vB)+1.$
\end{proposition}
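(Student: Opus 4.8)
The plan is to mirror the argument of Proposition \ref{array}, converting the single $\overline{w}$-array into the $(\ell,q-1)$-array $\vB$ here and producing $m$ base codewords rather than one. The key structural observation is that an $(\ell,q-1)$-array has all cells nonempty (it is the special case $\lambda_1=\cdots=\lambda_N=\ell$ of a $\lbar$-array), so every column carries exactly $\ell$ entries and there are $q-1$ columns. Since $w=\be\ell$ and $q-1=m\be$, I would partition the $q-1$ columns of $\vB$ into $m$ consecutive blocks, each consisting of $\be$ columns. Block $j$ (for $j\in[m]$) then supplies a base codeword $\vg^{(j)}\in\mz^X$, $X=\bbZ_n$, of weight exactly $\be\cdot\ell=w$, as required.

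First I would define the codewords explicitly. For block $j$, assign to each of its $\be$ columns a distinct nonzero symbol from $\bbZ_q\setminus\{0\}$ — one natural choice is to let the $\be$ columns in block $j$ carry the symbols $(j-1)\be+1,\ldots,j\be$ — and then set $\vg^{(j)}_x$ equal to the symbol of whichever column (within block $j$) contains the entry $x$, and $0$ otherwise. Property (P2) guarantees $\|\vg^{(j)}\|=w$, and because the entries of $\vB$ are all distinct, the supports and symbol assignments are unambiguous. It then remains to verify conditions (C5), (C6) and (C7), and to invoke the discussion preceding the statement, which says these three conditions are necessary and sufficient for the developed code $\C=\{T^{\ell i}(\vg^{(j)}):i\in[0,n/\ell-1],\ j\in[m]\}$ to have distance $2w-1$; optimality against the Johnson bound (Proposition \ref{johnson1}) is immediate once the size is $mn/\ell=(q-1)n/w$.

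Next I would discharge the three conditions using the three defining properties of the array, exactly as in the proof of Proposition \ref{array}. Condition (C6) asks that two equal nonzero symbols within one codeword lie in distinct residue classes modulo $\ell$; since equal symbols within a block come from the same column, and distinct entries of a single column are forced to be incongruent modulo $\ell$ by property (P1) (each entry is congruent to its row index, and the $\ell$ entries of a column occupy distinct rows), this holds. Condition (C7) requires that equal symbols appearing in \emph{different} codewords $\vg^{(i)},\vg^{(j)}$ with $i\neq j$ sit in distinct residue classes modulo $\ell$ — but with the symbol assignment above, no symbol is shared between two different blocks, so (C7) is vacuously satisfied. Condition (C5) is the genuine difference-set requirement: it is controlled by property (P3), which makes all the row-differences nonzero and distinct, together with the bound $n\geq 2\sigma(\vB)+1$, which ensures that the scope does not wrap around modulo $n$ and that the ``doubling'' clause $2(x-w)\not\equiv 0\pmod n$ cannot fail. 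Here the ceiling term $\lceil b_{i,j}/2\rceil$ in the definition of $\sigma(\vB)$ is exactly what forces $n$ to be large enough to kill the degenerate $2(x-w)\equiv 0$ case.

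The main obstacle I expect is the careful bookkeeping in verifying (C5), where one must translate the single-row difference condition of (P3) into the two bulleted congruence conditions modulo $\ell$ and modulo $n$, keeping track of which entries share a residue class (equivalently, a row). The subtlety is that (C5) couples entries drawn from possibly different codewords, so I would reduce it to a statement about entries sharing a common row of $\vB$ — since two entries contribute a difference $\equiv 0\pmod\ell$ precisely when they lie in the same row — and then apply (P3) row by row, using $n\geq 2\sigma(\vB)+1$ to lift the distinctness of differences from $\bbZ$ to $\bbZ_n$. Once this is done, the verification is routine and parallels the constant-composition case verbatim.
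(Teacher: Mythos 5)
Your construction and verification strategy coincide with the paper's own proof: the same partition of the $q-1$ columns into $m$ blocks of $\beta$ columns, the same assignment of symbols $(j-1)\beta+1,\ldots,j\beta$ to block $j$, the same disposal of (C6) via (P1), and the same observation that (C7) is vacuous because no symbol is shared between blocks. Your outline of (C5) also follows the paper's reduction: same row $\Longleftrightarrow$ difference divisible by $\ell$, then (P3) together with $n\geq 2\sigma(\vB)+1$ lifts distinctness of differences from $\bbZ$ to $\bbZ_n$.

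However, two linked points in your (C5) sketch are off, and one of them is a genuine gap. First, you assign the ceiling term $\lceil b_{i,j}/2\rceil$ of the scope the wrong job: the doubling clause $2(x-w)\not\equiv 0\pmod{n}$ does not need it. If $x-w\equiv 0\pmod{\ell}$, then $x$ and $w$ lie in the same row, so $|x-w|$ is a row difference and $|x-w|\leq\sigma(\vB)$ already by the difference part of the scope, whence $0<2|x-w|\leq 2\sigma(\vB)<n$. (The ceiling term alone would only give $|x-w|\leq n-1$, which does not exclude $|x-w|=n/2$.) What the ceiling term actually buys is that every entry $b$ of $\vB$ satisfies $b\leq 2\sigma(\vB)\leq n-1$, so the entries of $\vB$ are pairwise distinct residues modulo $n$; this makes the base codewords well defined and is needed precisely in the degenerate case your row-by-row reduction misses: the case $w=y$ in the second bullet of (C5). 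There $y-w=0$ is not a difference covered by (P3), and the quantity to control is $z-x$, where $x$ and $z$ may lie in \emph{different} rows, so neither (P3) nor the row-difference bound applies. The paper handles this case by noting that distinct base codewords have disjoint supports (each entry lies in exactly one column, hence in exactly one block), so $w=y$ forces $i=j$; then $\{w,x\}\neq\{y,z\}$ forces $x\neq z$, and the entry bound gives $x\not\equiv z\pmod{n}$. Once you add this case and reassign the ceiling term its correct role, your plan goes through and is the paper's proof.
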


\begin{proof}
Let $\vB$ be an $(\ell,q-1)$-array and let $C_{i}$ denote the set of entries in
column $i$ of $\vB$, $i\in[q-1]$. We define the $m$ vectors $\vg^{(1)},\ldots ,\vg^{(m)}$ as
follows: for $j\in[m]$ and $0\leq z\leq n-1$,
\begin{equation}
\vg_{z}^{(j)}=
\begin{cases}
r, & \text{if $z\in C_{r}$ for some $r\in[(j-1)\beta+1,j\beta]$} \\ 
0, & \text{otherwise.}
\label{T8}
\end{cases}
\end{equation}
Since the entries of $\vB$ are distinct, $\vg^{(j)}$ 
is well-defined.
Moreover, the set of nonzero entries of $\vg^{(j)}$ is precisely
$[(j-1)\beta+1,j\beta]$, and by property (P2), each symbol in $[(j-1)\beta+1,j\beta]$
occurs exacly $\ell$ times in $\vg^{(j)}$. Therefore, 
$\vg^{(j)}\in \bbZ_q^{\bbZ_n}$ and has weight $w=\beta\ell$.

We claim that
the $m$ vectors $\vg^{(1)},\ldots ,\vg^{(m)}$ satisfy conditions (C5)--(C7), 
and hence form the base codewords for an $\ell$-quasicyclic optimal $(n,2w-1,w)_{q}$-code.
The following establishes this claim.

First, suppose that $i\not= j$. If $\vg_{z}^{(i)}$ and $\vg_{y}^{(j)}$ are nonzero, then 
$\vg_{z}^{(i)}\in [(i-1)\beta +1,i\beta]$ and $\vg_{y}^{(j)}\in [(j-1)\beta +1,j\beta]$. 
Since $i\not= j$, we have $\vg_{z}^{(i)}\not= \vg_{y}^{(j)}$. Therefore (C7)
is satisfied.

Next, suppose that $z\not= y$ and $\vg_{z}^{(j)}=\vg_{y}^{(j)}=r\not= 0$. By (\ref{T8}), 
$z,y\in C_{r}$. Since $z\not= y$, $z$ and $y$ must belong to different rows
of $\vB$. Therefore, $z\not\equiv y\pmod{\ell}$ by (P1). Thus, $\vg^{(1)},\ldots ,\vg^{(m)}$ satisfy (C6).

Now suppose that $w,x\in \supp(\vg^{(i)})$, $w\not= x$. By (\ref{T8}), there exist $r_{w}$
and $r_{x}$ such that $w\in C_{r_{w}}$ and $x\in C_{r_{x}}$. 
If $x-w\equiv 0\pmod{\ell}$, then by (P1), $x$ and $w$ are in the same row of $\vB$. 
Therefore,
\begin{equation*}
0<|x-w| \leq \sigma(\vB),
\end{equation*}
and hence,
\begin{equation*}
0<2|x-w| \leq 2\sigma(\vB)<1+2\sigma(\vB)\leq n.
\end{equation*}
It follows that $2(x-w)\not\equiv 0\pmod{n}$.

Let $w,x\in \supp(\vg^{(i)})$ and $y,z\in \supp(\vg^{(j)})$, where $w\not= x$, $y\not= z$ such that 
$y-w\equiv 0\pmod{\ell}$, and if $i=j$ then $\left\{ w,x\right\} \not=
\left\{ y,z\right\} $. We want to show that 
\begin{equation*}
x-w\not\equiv z-y\pmod{n}\text{,}
\end{equation*}
or equivalently,
\begin{equation}
y-w\not\equiv z-x\pmod{n}\text{.}  \label{T9}
\end{equation}
Again, by (\ref{T8}), $w$, $x$, $y$, and $z$ are entries of $\vB$. 
Moreover, $w$ and $y$ are in the same row. We consider two cases. 

\begin{description}[\setlabelwidth{Case $\ y\neq w$:}\usemathlabelsep]
\item[Case $w\not=y$]
Since $0<|y-w|\leq \sigma(\vB)<n$, we have $y-w\not \equiv 0\pmod{n}$. Therefore, if $x=z$, 
then (\ref{T9}) holds. If $x\not= z$ and both $x$ and 
$z$ are in the same row, then (\ref{T9}) holds by property (P3) 
of $\vB$ and the assumption that $y\not= z$ and $n\geq 2\sigma(\vB)+1$. If $x$ and $z$
are in different rows, then by (P1), $z-x\not\equiv 0\pmod{\ell}$.
Since $y-w\equiv 0\pmod{\ell}$ and $\ell |n$, (\ref{T9}) follows. 

\item[Case $w=y$]
We claim that $i=j$. Indeed, assume that $y\in C_{r_{y}}$ and $w\in
C_{r_{w}} $. Then $r_{y}\in [ (j-1)\beta +1,j\beta] $ and $
r_{w}\in [ (i-1)\beta +1,i\beta] $. Hence, if $i\neq j$, then $
r_{y}\neq r_{w}$. Therefore, there are two entries in different columns of $\vB$ that
have the same value $y$, which is a contradiction. Hence, $i=j$. Since 
$\left\{ w,x\right\} \neq \left\{ y,z\right\} $, we have $x\not= z$.
Therefore, (\ref{T9}) holds.
\end{description}

Consequently, $\vg^{(1)},\ldots ,\vg^{(m)}$ satisfy (C5).
\end{proof}
\vskip 10pt 

\begin{example}
The $(2,4)$-array of scope $42$ in Example \ref{24}
gives $\vg^{(1)}$ and $\vg^{(2)}$, where
\begin{align*}
\vg_{z}^{(1)} &=\begin{cases}
1, & \text{if $z\in \left\{ 0,19\right\}$} \\ 
2, & \text{if $z\in \left\{ 6,23\right\}$} \\ 
0, & \text{otherwise,}
\end{cases}
\\
\vg_{z}^{(2)} &=\begin{cases}
3, & \text{if $z\in \left\{ 20,35\right\}$} \\ 
4, & \text{if $z\in \left\{ 30,61\right\}$} \\ 
0, & \text{otherwise.}
\end{cases}
\end{align*}
In this case, $q=5$, $w=4$,
$\beta=2$, $\ell=2$, and $m=2$. The vectors $\vg^{(1)}$ and $\vg^{(2)}$ form 
the base codewords of a $2$-quasicyclic optimal $(n,7,4)_{5}$-code when $n$ is even and $n\geq
85=2\times 42+1$.
\end{example}
\vskip 10pt

In view of Proposition \ref{array} and Proposition \ref{Array and code}, to prove 
Main Theorem 1 and Main Theorem 2,
it suffices to construct a $\overline{\lambda}$-array
for every partition $\overline{\lambda}$.

\section{Generalized Difference Triangle Sets}

In this section, the concept of {\em difference triangle sets} is generalized and used
to produce $\lbar$-arrays.
We begin with the definition of a difference triangle set.

\vskip 10pt
\begin{definition}
An $(I,J)$-{\em difference triangle set} ($\DTS$)
is a set $\A=\{A_1,\ldots,A_I\}$, where
$A_i=\{a_{i,1},\ldots,a_{i,J}\}$, $0=a_{i,1}<\cdots<a_{i,J}$, are lists of integers such 
that the differences
$a_{i,j}-a_{i,j'}$, $i\in[I]$, $1\leq j'\not=j\leq J$, are all distinct.
\end{definition}
\vskip 10pt

\begin{example}
\label{33}
A $(3,4)$-$\DTS$:
\begin{equation*}
\{\{0,1,10,18\}, \{0,2,7,13\}, \{0,3,15,19\}\}.
\end{equation*}
The corresponding differences are displayed in triangular arrays below:
\begin{equation*}
\begin{array}{ccc}
1 & 10 & 18 \\
& 9 & 17 \\
& & 8 \\
\end{array}
\ \ \ \ \ \ \ \
\begin{array}{ccc}
2 & 7 & 13 \\
& 5 & 11 \\
& & 6 \\
\end{array}
\ \ \ \ \ \ \ \ 
\begin{array}{ccc}
3 & 15 & 19 \\
& 12 & 16 \\
& & 4 \\
\end{array}
\end{equation*}
\end{example}
\vskip 10pt

The {\em scope} of an $(I,J)$-$\DTS$ $\A=\{A_1,\ldots,A_I\}$ is
\begin{equation*}
m(\A) = \max_{A\in\A} \{a \in A\}.
\end{equation*}
Difference triangle sets with scope as small as possible are often required for applications.
Define
\begin{equation*}
M(I,J) = \min \{ m(\A) : \text{$\A$ is an $(I,J)$-$\DTS$} \}.
\end{equation*}
Difference triangle sets were introduced by Kl{\o}ve \cite{Klove:1988,Klove:1989}
and have numerous applications
\cite{Babcock:1953,Eckler:1960,RobinsonBernstein:1967,Biraudetal:1974,Blumetal:1975,FangSandrin:1977,Atkinsonetal:1986}.
A $(1,J)$-$\DTS$ is known as a {\em Golomb ruler}
with $J$ marks. 

We generalize difference triangle sets as follows.

\vskip 10pt
\begin{definition}
Let $\Jbar=\llb J_1,\ldots,J_I\rrb$ be a partition. 
A set $\A=\{A_1,\ldots,A_I\}$ with $A_i=\{a_{i,1},\ldots,a_{i,J_i}\}$,
$0= a_{i,1}<\cdots<a_{i,J_i}$, is a
$\Jbar$-{\em generalized difference triangle set} ($\GDTS$)
if the differences $a_{i,j}-a_{i,j'}$, $i\in[I]$, $1\leq j'\not=j\leq J_i$, are all distinct.
\end{definition}
\vskip 10pt
Thus, a $\GDTS$ is similar to a $\DTS$, but allowing the sets to be of different sizes.
In particular, if $J_1=\cdots=J_I=J$, then a $\Jbar$-$\GDTS$ is an $(I,J)$-$\DTS$.
The scope of a $\GDTS$ $\A=\{A_1,\ldots,A_I\}$ is defined similarly as for a $\DTS$:
\begin{equation*}
m(\A) = \max_{A\in\A} \{a \in A\}.
\end{equation*}

We now relate $\Jbar$-$\GDTS$ to $\lbar$-arrays. Let
$\lbar=\llb\lambda_1,\ldots,\lambda_N\rrb$ be a partition.
The {\em Ferrers diagram} of $\lbar$ is an array of cells with $N$ left-justified rows
and $\lambda_i$ cells in row $i$. The {\em conjugate} of $\lbar$ is the
partition $\overline{\lambda^*}=\llb \lambda_1^*,\ldots,\lambda_{\lambda_1}^*\rrb$,
where $\lambda_j^*$ is the number of parts of $\lbar$ that are at least $j$.
$\overline{\lambda^*}$ can
also be obtained by reflecting the
Ferrers diagram of $\lbar$ along its main diagonal. Conjugation of partitions is
an involution.

\vskip 10pt
\begin{example}
The Ferrers diagrams of the partition $\llb 5,3,3,2\rrb$ and its conjugate $\llb 4,4,3,1,1\rrb$ 
are shown respectively below:
\begin{equation*}
\begin{array}{c|c|c|c|c|c|}
\cline{2-6}
5 & \phantom{*} & \phantom{*} & \phantom{*} & \phantom{*} & \phantom{*} \\
\cline{2-6}
3 & & & \\
\cline{2-4}
3 & & &  \\
\cline{2-4}
2 & &  \\
\cline{2-3} 
\end{array}
\ \ \ \ \ \ \ \ \ \ \ \ 
\begin{array}{c|c|c|c|c|}
\cline{2-5}
4 & \phantom{*} & \phantom{*} & \phantom{*} & \phantom{*} \\
\cline{2-5}
4 & & & & \\
\cline{2-5}
3 & & & \\
\cline{2-4}
1 & \\
\cline{2-2}
1 & \\
\cline{2-2}
\end{array}
\end{equation*}
\end{example}

\vskip 10pt
\begin{proposition}
\label{arrayfromGDTS}
Let $\lbar=\llb\lambda_1,\ldots,\lambda_N\rrb$ be a partition.
If there exists a $\overline{\lambda^*}$-$\GDTS$ of scope $s$, then there exists a
$\lbar$-array of scope at most $s\lambda_1$.
\end{proposition}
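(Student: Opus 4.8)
The plan is to place the $\lambda_1$ sets of the hypothesized $\overline{\lambda^*}$-$\GDTS$ into the $\lambda_1$ rows of the array, rescaling each entry so that it lands in the correct residue class. First I would record the bookkeeping that makes the row and column counts match: the conjugate $\overline{\lambda^*}=\llb \lambda_1^*,\ldots,\lambda_{\lambda_1}^*\rrb$ has exactly $\lambda_1$ parts, so a $\overline{\lambda^*}$-$\GDTS$ is a family $\A=\{A_1,\ldots,A_{\lambda_1}\}$ with $A_i=\{a_{i,1},\ldots,a_{i,\lambda_i^*}\}$ and $0=a_{i,1}<\cdots<a_{i,\lambda_i^*}$, supplying precisely one set $A_i$ of size $\lambda_i^*$ for each row index $i\in[\lambda_1]$.

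Next I would define $\vB$ by a left-justified placement driven by these sets. For each row $i$, I would fill the cells in columns $1,\ldots,\lambda_i^*$ (leaving the rest of the row empty) by setting
\begin{equation*}
b_{i,k}=\lambda_1 a_{i,k}+r_i,\qquad k\in[\lambda_i^*],
\end{equation*}
where $r_i\in\{0,\ldots,\lambda_1-1\}$ is the unique residue with $r_i\equiv i\pmod{\lambda_1}$. Property (P1) is then immediate, since $b_{i,k}\equiv r_i\equiv i\pmod{\lambda_1}$. For (P3) I would note that the within-row differences are $b_{i,k}-b_{i,k'}=\lambda_1(a_{i,k}-a_{i,k'})$; these are nonzero because the $a_{i,k}$ are distinct, and they remain pairwise distinct across all rows because multiplication by the fixed factor $\lambda_1$ preserves the distinctness of differences guaranteed by the defining property of $\A$.

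The step I expect to be the crux is (P2), the column counts, and this is exactly where conjugation does the work. Column $j$ receives an entry from row $i$ precisely when the left-justified row $i$ reaches column $j$, i.e.\ when $\lambda_i^*\ge j$. Since $i\mapsto\lambda_i^*$ is non-increasing, the number of such rows is the number of parts of $\overline{\lambda^*}$ that are at least $j$; by definition this is the $j$-th part of the conjugate of $\overline{\lambda^*}$. Because conjugation is an involution, the conjugate of $\overline{\lambda^*}$ is $\lbar$ itself, so this count equals $\lambda_j$, which is exactly what (P2) demands. (One should also check $\lambda_i^*\le N$ so that every occupied cell has a legitimate column index, which holds because $\overline{\lambda^*}$ has at most $N$ positive parts.)

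Finally I would bound the scope. The difference terms contribute $\max_i\bigl(b_{i,\lambda_i^*}-b_{i,1}\bigr)=\lambda_1\max_i a_{i,\lambda_i^*}=\lambda_1 s$, while each ceiling term satisfies $\lceil b_{i,k}/2\rceil\le\lceil(\lambda_1 s+\lambda_1-1)/2\rceil\le\lambda_1 s$ (using $s\ge 1$, so that $\lambda_1-1\le\lambda_1 s$), whence $\sigma(\vB)\le\lambda_1 s$. The only delicate point is the degenerate case $s=0$, which forces $\overline{\lambda^*}=\llb 1,\ldots,1\rrb$ and hence $\lbar=\llb\lambda_1\rrb$; there the array is a single column and can be exhibited directly, so it does not affect the construction above.
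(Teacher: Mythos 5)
Your proof is correct and takes essentially the same route as the paper's: the identical construction $b_{i,j}=\lambda_1 a_{i,j}+(i \bmod \lambda_1)$, with (P1) and (P3) verified the same way, (P2) deduced from conjugation being an involution (which the paper phrases as the filled cells forming the Ferrers diagram of $\overline{\lambda^*}$), and the same ceiling estimate giving scope at most $s\lambda_1$. Your explicit flagging of the degenerate case $s=0$ (which the paper's final inequality $\frac{s\lambda_1+\lambda_1}{2}\le s\lambda_1$ silently excludes by needing $s\ge 1$) is a minor refinement of bookkeeping, not a different approach.
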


\begin{proof}
Let $\overline{\lambda^*}=\llb \lambda^*_1,\ldots,\lambda^*_{\lam_1}\rrb$ and let
$\A=\{A_1,\ldots,A_{\lam_1}\}$ be a $\overline{\lambda^*}$-$\GDTS$ of scope $s$.
Construct a $\lam_1\times N$ array $\vB$ as follows:
If $A_i=\{a_{i,1},\ldots,a_{i,\lambda^*_i}\}$, then the $(i,j)$th cell of $\vB$, $i\in[\lam_1]$,
$j\in[N]$,
contains $b_{i,j}=a_{i,j}\lam_1+(i~{\rm mod}~{\lambda_1})$ 
if $j\in [\lambda^*_i]$, and empty otherwise.
Then the filled cells of $\vB$ take the shape of the Ferrers diagram of $\overline{\lambda^*}$.
Thus, the number of nonempty cells in column $j$ of $\vB$ is precisely $\lambda_j$.
It is also easy to see that each entry in row $i$ of $\vB$ is congruent to 
$i~{\rm mod~}{\lambda_1}$.
The differences $b_{i,j}-b_{i,j'}$
are all distinct because
the differences $a_{i,j}-a_{i,j'}$ are all distinct in the $\GDTS$ $\A$. Moreover, all of these differences 
are at most $s\lam_1$. Finally, for any $i \in [\lam_1]$ and $j \in [\lambda^*_i]$, 
\begin{equation*}
\left\lceil \dfrac{b_{i,j}}{2}\right\rceil \le \left\lceil \dfrac{s\lam_1+(\lam_1-1)}{2}\right\rceil \le \dfrac{s\lam_1+\lam_1}{2} \le s\lam_1.
\end{equation*}
Therefore $\vB$ is a $\lbar$-array of scope at most $s\lam_1$.
\end{proof}
\vskip 10pt

\begin{corollary}
\label{arrayfromDTS}
If there exists a $(\lam,N)$-$\DTS$ of scope $s$, then there exists a $(\lam,N)$-array 
of scope at most $s\lam$.
\end{corollary}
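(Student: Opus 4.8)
The plan is to derive this as the special case of Proposition \ref{arrayfromGDTS} in which every part of the partition is equal. First I would recall that a $(\lam,N)$-array is, by definition, nothing but a $\lbar$-array for the partition $\lbar=\llb\lambda,\ldots,\lambda\rrb$ consisting of $N$ parts all equal to $\lambda$, so in particular $\lambda_1=\lambda$. The strategy is then to match the hypothesis of Proposition \ref{arrayfromGDTS} — the existence of a $\overline{\lambda^*}$-$\GDTS$ — against the hypothesis of the corollary, namely the existence of a $(\lam,N)$-$\DTS$.

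The key step is to compute the conjugate partition $\overline{\lambda^*}$ for $\lbar=\llb\lambda,\ldots,\lambda\rrb$. Since $\lambda_j^*$ counts the parts of $\lbar$ that are at least $j$, and every one of the $N$ parts equals $\lambda$, I would observe that $\lambda_j^*=N$ for each $j\in[\lambda]$, and that $\overline{\lambda^*}$ has exactly $\lambda_1=\lambda$ parts. Hence $\overline{\lambda^*}=\llb N,\ldots,N\rrb$, a partition into $\lambda$ parts all equal to $N$. By the remark following the definition of a $\GDTS$ — that a $\Jbar$-$\GDTS$ with $J_1=\cdots=J_I=J$ is precisely an $(I,J)$-$\DTS$ — a $\overline{\lambda^*}$-$\GDTS$ is exactly a $(\lam,N)$-$\DTS$.

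Putting these identifications together, I would feed a given $(\lam,N)$-$\DTS$ of scope $s$, now read as a $\overline{\lambda^*}$-$\GDTS$ of scope $s$, into Proposition \ref{arrayfromGDTS}. That proposition then yields a $\lbar$-array of scope at most $s\lambda_1=s\lambda$, and since $\lbar=\llb\lambda,\ldots,\lambda\rrb$ this $\lbar$-array is precisely a $(\lam,N)$-array. This establishes the claimed scope bound $s\lam$.

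I do not expect a genuine obstacle here, as the corollary is a direct specialization of the preceding proposition. The only point requiring care is the conjugation bookkeeping — confirming that the rectangular Ferrers diagram of $\llb\lambda,\ldots,\lambda\rrb$, which has $N$ rows and $\lambda$ columns, reflects to the conjugate $\llb N,\ldots,N\rrb$ — together with verifying that the ``all sizes equal'' clauses in the definitions of a $(\lam,N)$-array and a $(\lam,N)$-$\DTS$ line up with the rectangular shape produced by Proposition \ref{arrayfromGDTS}.
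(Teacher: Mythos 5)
Your proposal is correct and is precisely the derivation the paper intends: the paper offers no separate proof, treating the corollary as the immediate specialization of Proposition \ref{arrayfromGDTS} to the rectangular partition $\lbar=\llb\lambda,\ldots,\lambda\rrb$, whose conjugate $\llb N,\ldots,N\rrb$ (with $\lambda$ parts) makes the required $\overline{\lambda^*}$-$\GDTS$ exactly a $(\lam,N)$-$\DTS$. Your conjugation bookkeeping and the identification of the resulting $\lbar$-array as a $(\lam,N)$-array are both accurate.
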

\vskip 10pt


\begin{example}
Since $\llb 3,3,2,2\rrb^*=\llb 4,4,2\rrb$, we can construct a $\llb 3,3,2,2\rrb$-array 
from a $\llb 4,4,2\rrb$-$\GDTS$ via the proof of Proposition \ref{arrayfromGDTS}. If the 
$\llb 4,4,2\rrb$-$\GDTS$ is $\A=\{\{0,1,10,18\},\{0,2,7,13\},\{0,3\}\}$, the $\llb 3,3,2,2\rrb$-array
obtained is 
\begin{equation*}
\begin{array}{|c|c|c|c|}
\hline
1 & 4 & 31 & 55 \\
\hline
2 & 8 & 23 & 41 \\
\hline
0 & 9 & & \\
\hline
\end{array}
\end{equation*}
This array has scope 54.
\end{example}
\vskip 10pt

\begin{example}
From the $(3,4)$-$\DTS$ $\A=\{\{0,1,10,18\}$, $\{0,2,7,13\}$, $\{0,3,15,19\}\}$, we can construct 
the following $(3,4)$-array via the proof of Proposition \ref{arrayfromGDTS}.
\begin{equation*}
\begin{array}{|c|c|c|c|}
\hline
1 & 4 & 31 & 55 \\
\hline
2 & 8 & 23 & 41 \\
\hline
0 & 9 & 45 & 57 \\
\hline
\end{array}
\end{equation*}
This array has scope 57.
\end{example}
\vskip 10pt

\section{Proofs of the Main Theorems}

In this section, we use Golomb rulers to construct $\GDTS$ and provide proofs to
Main Theorem 1 and Main Theorem 2.

Let $\wp(x)$ denote the smallest prime power not smaller than $x$.
Atkinson {\em et al.} \cite[Lemma 2]{Atkinsonetal:1986} proved the following.

\vskip 10pt
\begin{theorem}
\label{atkinson}
$M(1,J) \leq (J-1)\wp(J-1)$.
\end{theorem}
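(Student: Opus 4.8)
The plan is to read Theorem~\ref{atkinson} through the dictionary ``$(1,J)$-$\DTS$ $=$ Golomb ruler with $J$ marks.'' A $(1,J)$-$\DTS$ is a single set $A=\{a_1,\dots,a_J\}$ with $0=a_1<\cdots<a_J$ and all differences $a_j-a_{j'}$ distinct, and its scope is the largest mark $a_J$. So the task is to exhibit $J$ integers $0=a_1<\cdots<a_J\le (J-1)\wp(J-1)$ with distinct pairwise differences. Write $q=\wp(J-1)$, so $q$ is a prime power with $q\ge J-1$. The engine I would use is Singer's theorem: from $PG(2,q)$ one obtains a \emph{perfect} (planar) difference set $D\subseteq\bbZ_n$ with $n=q^2+q+1$ and $|D|=q+1$, in which every nonzero element of $\bbZ_n$ is represented exactly once as a difference of two elements of $D$. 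Since $q+1\ge J$, the set $D$ has at least $J$ marks to work with; the point is to extract $J$ of them that are packed into an interval of length at most $(J-1)q$.

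The first step is a cyclic-to-linear reduction. Arrange the $q+1$ elements of $D$ around a circle of circumference $n$ and let $g_0,\dots,g_q$ be the $q+1$ consecutive gaps, so $\sum_i g_i=n$. I would show that any block of $J$ cyclically consecutive elements of $D$, translated so that its first element is $0$, is a Golomb ruler whose scope equals the length $L$ of the block (the sum of the $J-1$ gaps it spans). Indeed, every mark of the block then lies in $[0,L]$, so every pairwise integer difference lies in $(-L,L)\subseteq(-n,n)$; hence a difference that vanishes modulo $n$ must literally be $0$, and two equal integer differences would be equal modulo $n$, contradicting the perfect-difference-set property. Thus distinctness of differences modulo $n$ upgrades to distinctness as integers, and the block is a genuine $(1,J)$-$\DTS$ of scope $L$.

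The second step is an averaging argument to locate a short block. There are $q+1$ blocks of $J-1$ cyclically consecutive gaps, and each gap $g_i$ lies in exactly $J-1$ of them, so the block lengths sum to $(J-1)\sum_i g_i=(J-1)n$. The average block length is therefore $(J-1)n/(q+1)=(J-1)q+(J-1)/(q+1)$. Because $q=\wp(J-1)\ge J-1$, the remainder $(J-1)/(q+1)$ is strictly less than $1$, so the average is strictly below $(J-1)q+1$; as block lengths are integers, some block has length at most $(J-1)q=(J-1)\wp(J-1)$. The $J$ marks of that block furnish the required ruler, proving the bound.

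I expect the main obstacle to be the clean justification of the cyclic-to-linear passage: one must argue that cutting the circle cannot create a spurious repeated difference, which is exactly the observation that an integer in $(-n,n)$ divisible by $n$ is $0$, together with the uniqueness of difference representations in $D$. A secondary but essential point is pinning down the inequality $J-1\le q$, which is precisely what forces $(J-1)/(q+1)<1$ and hence yields the clean bound $(J-1)q$ rather than $(J-1)q+1$; this hinges only on the definition of $\wp$. (If one wished to avoid Singer's theorem, a Bose--Chowla Sidon set of size $q$ in $\bbZ_{q^2-1}$, built from a primitive element of $\bbF_{q^2}$, supplies an analogous cyclic distinct-difference structure and the same window/averaging step applies; but Singer's $q+1$ marks are what match the stated bound with $\wp(J-1)$.)
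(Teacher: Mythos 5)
Your proof is correct, but there is nothing in the paper to compare it against: the paper does not prove this statement, it simply quotes it from Atkinson, Santoro and Urrutia (their Lemma 2). What you have written is a correct, self-contained reconstruction of what is essentially the classical argument behind that cited result. Both halves of your argument are sound. The cyclic-to-linear step works exactly as you say: with $q=\wp(J-1)$ and $n=q^2+q+1$, all integer differences within a block of length $L\leq (J-1)q<n$ lie in $(-n,n)$, so equality of two integer differences forces equality of the corresponding residues modulo $n$, and the uniqueness of difference representations in the Singer set (which is preserved under translation) then forces the pairs of marks to coincide; likewise no nonzero difference can vanish modulo $n$. The averaging step is also right: the $q+1$ windows of $J-1$ consecutive gaps cover each gap exactly $J-1$ times (legitimate since $J-1\leq q+1$), so their lengths sum to $(J-1)n$ and average $(J-1)q+(J-1)/(q+1)$, and the inequality $q=\wp(J-1)\geq J-1$ gives $(J-1)/(q+1)<1$, letting you round down to an integer window length of at most $(J-1)q=(J-1)\wp(J-1)$. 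One could quibble that the degenerate cases $J\leq 2$ deserve a sentence (they are trivial directly), and that the Singer set needs $q+1\geq J$ elements, which again is exactly $q\geq J-1$; you address the latter. So: correct proof, same spirit as the source the paper cites, and a genuine addition relative to the paper itself, which treats the statement as a black box.
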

\vskip 10pt

%



\begin{proposition}
\label{Existence of GDTS}
For any partition $\Jbar=\llb J_1,\ldots,J_I\rrb$, there exists a $\Jbar$-$\GDTS$ of scope
at most $(\sum\Jbar-1)\wp(\sum\Jbar-1)$.
\end{proposition}

\begin{proof}
By Theorem \ref{atkinson}, there exists a Golomb ruler $\{R\}$ of $\sum\Jbar$ marks
and scope $m(\{R\}) \leq (\sum\Jbar-1)\wp(\sum\Jbar-1)$. Partition $R$ into $I$ subsets,
$R=R_1\sqcup\cdots\sqcup R_I$, where
$|R_i|=J_i$, $i\in[I]$. Suppose 
\begin{equation*}
R_i=\{r_{i,1},\ldots,r_{i,J_i}\}, 
\end{equation*}
where $0 \leq r_{i,1}<\cdots <r_{i,J_i}$. For each $i\in[I]$, let
\begin{equation*}
A_i=\{a_{i,1},\ldots,a_{i,J_i}\},
\end{equation*}
where $a_{i,j}=r_{i,j}-r_{i,1}$, $j\in[J_i]$. Then the set 
$\A=\{A_1,\ldots,A_I\}$ forms a $\Jbar$-$\GDTS$ of scope
\begin{equation*}
m(\A)\leq m(\{R\}) \leq \left(\sum\Jbar-1\right)\wp\left(\sum\Jbar-1\right).
\end{equation*}
\end{proof}
\vskip 10pt

The following corollary is immediate. 

\vskip 10pt
\begin{corollary}
\label{Existence of DTS}
For any $I>0$ and $J>0$, there exists an $(I,J)$-$\DTS$ of scope at most
$(IJ-1)\wp(IJ-1)$.
\end{corollary}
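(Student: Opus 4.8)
The plan is to derive Corollary~\ref{Existence of DTS} as a direct specialization of Proposition~\ref{Existence of GDTS}. Recall that an $(I,J)$-$\DTS$ is precisely a $\Jbar$-$\GDTS$ in the special case where all parts are equal, namely $J_1=\cdots=J_I=J$. So first I would take the partition $\Jbar=\llb J,J,\ldots,J\rrb$ consisting of $I$ copies of $J$. Proposition~\ref{Existence of GDTS} then guarantees the existence of a $\Jbar$-$\GDTS$, which by the remark following the definition of $\GDTS$ is exactly an $(I,J)$-$\DTS$.

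The only remaining task is to evaluate the scope bound. For this choice of $\Jbar$ we have $\sum\Jbar=\sum_{i=1}^{I}J=IJ$. Substituting into the bound from Proposition~\ref{Existence of GDTS}, the scope is at most $(\sum\Jbar-1)\wp(\sum\Jbar-1)=(IJ-1)\wp(IJ-1)$, which is exactly the claimed bound.

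There is essentially no obstacle here: the corollary is labelled ``immediate'' in the excerpt precisely because it requires only recognizing the equal-parts case and performing the substitution $\sum\Jbar=IJ$. The entire content of the argument is front-loaded into Proposition~\ref{Existence of GDTS} (and ultimately into Theorem~\ref{atkinson}); the corollary merely reports what that proposition yields when the underlying partition is uniform. Consequently the proof can be stated in one or two sentences, with no calculation beyond the trivial evaluation of the sum of $I$ equal parts.
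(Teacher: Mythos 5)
Your proof is correct and matches the paper exactly: the paper presents this corollary as ``immediate'' from Proposition~\ref{Existence of GDTS}, relying on precisely the specialization you describe --- taking $\Jbar=\llb J,\ldots,J\rrb$ with $I$ equal parts (which the paper's remark after the definition identifies with an $(I,J)$-$\DTS$) and substituting $\sum\Jbar=IJ$ into the scope bound. Nothing further is needed.
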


\subsection{Proof of Main Theorem 1}
\label{MT1}


Let $\wbar=\llb w_1,\ldots,w_{q-1}\rrb$ be a partition and consider
$\overline{w^*}=\llb w^*_1,\ldots,w^*_{w_1}\rrb$.
By Proposition \ref{Existence of GDTS}, there exists a $\overline{w^*}$-$\GDTS$ of scope 
at most $(\sum \wbar -1)\wp(\sum \wbar -1)$. 
Therefore, by Proposition \ref{arrayfromGDTS}, there exists a $\wbar$-array
of scope at most $w_1(\sum \wbar -1)\wp(\sum \wbar -1)$. 
Finally, Proposition \ref{array} guarantees the existence 
of a $w_1$-quasicyclic optimal $\nwbq$-code of size $n/w_1$ for all $n \equiv 0\pmod{w_1}$,
$n\geq 2w_1(\sum \wbar -1)\wp(\sum \wbar -1)+1$. This, together with Lemma \ref{lengthen},
proves Main Theorem 1.

\subsection{Proof of Main Theorem 2}
\label{MT2}


Suppose $w|(q-1)n$. Then by Lemma \ref{divlemma}, let $w=\beta\ell$, where $\beta|(q-1)$.
By Corollary \ref{Existence of DTS}, there exists an $(\ell,q-1)$-$\DTS$ of scope at most 
$(\ell(q-1)-1)\wp(\ell(q-1)-1)$.
Therefore, by Corollary \ref{arrayfromDTS}, there exists an $(\ell,q-1)$-array of scope at most 
$\ell(\ell(q-1)-1)\wp(\ell(q-1)-1)$.
Finally, Proposition \ref{Array and code} guarantees 
the existence of an $\ell$-quasicyclic optimal $\nwq$-code of size 
$(q-1)n/w$ for all $n \equiv 0 \pmod{\ell}$, $n \geq 2\ell(\ell(q-1)-1)\wp(\ell(q-1)-1)+1$.
This proves Main Theorem 2.

In particular, by taking $\beta=1$ and $\beta=w$ respectively, we have
the following results:
\begin{enumerate}[(i)]
\item There exists a $w$-quasicyclic optimal $(n,2w-1,w)_q$-code for all $n\equiv 0\pmod{w}$, 
$n\geq 2w(w(q-1)-1)\wp(w(q-1)-1)+1$.
\item If $w|(q-1)$, then there exists a cyclic optimal $(n,2w-1,w)_q$-code
for all $n\geq 2(q-2)\wp(q-2)+1$.
\end{enumerate}


\section{Resolution of an Open Problem of Etzion}

A {\em set system} is a pair $S=(X,\B)$, where $X$ is a finite set of {\em points}, and
$\B\subseteq 2^X$. The elements of $\B$ are called {\em blocks}. The {\em order}
of $S$ is the number of points, $|X|$.
If $|B|=k$ for all
$B\in\B$, then $S$ is said to be {\em $k$-uniform}.
Let $\A\subseteq 2^X$. A {\em transverse} of $\A$ is set $T\subseteq X$ such that
$|T\cap A|\leq 1$ for all $A\in\A$.
Hanani \cite{Hanani:1963} introduced the following generalization
of $t$-designs.

\vskip 10pt
\begin{definition}
An {\em ${\rm H}(n,q,w,t)$ design} is a triple $(X,\G,\B)$, where $(X,\B)$ is a $w$-uniform set system
of order $nq$, $\G=\{G_1,\ldots,G_n\}$ is a partition of $X$ into $n$ sets, each
of cardinality $q$, such that 
\begin{enumerate}[(i)]
\item $B$ is a transverse of $\G$ for all $B\in\B$; and
\item each $t$-element transverse of $\G$ is contained in precisely one block of $\B$.
\end{enumerate}
\end{definition}
\vskip 10pt

From an ${\rm H}(n,q,w,t)$ design $(X,\G,\B)$, we can form a constant-weight code
$\C\subseteq\bbZ_{q+1}^n$ as follows.
Let $G_i=\{\gamma_{1,i},\gamma_{2,i},\ldots,\gamma_{q,i}\}$, where $0\not\in G_i$. The
code $\C$ has a codeword for each block. Assume $B=\{b_1,b_2,\ldots,b_w\}$ is a
block of $\B$ (this block is denoted by $\{\langle i_1,j_1\rangle,\langle i_2,j_2\rangle,\ldots,
\langle i_w,j_w\rangle\}$, where $b_s=\gamma_{j_s,i_s}$). We form the
codeword $\vu\in\C$ corresponding to $B$ as follows: for $i\in[n]$, 
\begin{equation*}
\vu_i = \begin{cases}
j,&\text{if $b_r=\gamma_{j,i}$ for some $r\in[w]$} \\
0,&\text{otherwise.}
\end{cases}
\end{equation*}
The distance of $\C$ is at least $w-t+1$. If $\C$ has distance $2(w-t)+1$,
Etzion \cite{Etzion:1997}
calls the ${\rm H}(n,q,w,t)$ design, from which $\C$ is constructed,
a {\em generalized Steiner system} ${\rm GS}(t,w,n,q)$.

It is not hard to verify that a ${\rm GS}(t,w,n,q)$ contains exactly
$q^t\binom{n}{t}/\binom{w}{t}$ blocks. By the Johnson bound, we have
\begin{equation*}
A_{q+1}(n,2(w-t)+1,w) \leq q^t\frac{\binom{n}{t}}{\binom{w}{t}}.
\end{equation*}
It follows from the above construction that if a ${\rm GS}(t,w,n,q)$ exists,
then 
\begin{equation*}
A_{q+1}(n,2(w-t)+1,w) = q^t\frac{\binom{n}{t}}{\binom{w}{t}}.
\end{equation*}

The next result establishes the converse when $\binom{w}{t}|q^t\binom{n}{t}$.

\vskip 10pt
\begin{proposition}
Suppose that $\binom{w}{t}|q^t\binom{n}{t}$. Then a ${\rm GS}(t,w,n,q)$ exists
if
\begin{equation*}
A_{q+1}(n,2(w-t)+1,w) = q^t\frac{\binom{n}{t}}{\binom{w}{t}}.
\end{equation*}
\end{proposition}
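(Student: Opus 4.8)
The plan is to reverse the construction that turns an ${\rm H}(n,q,w,t)$ design into a code. Assume the stated equality and let $\C\subseteq\bbZ_{q+1}^n$ be an optimal $(n,2(w-t)+1,w)_{q+1}$-code, so $|\C|=q^t\binom{n}{t}/\binom{w}{t}$. Set $X=[n]\times[q]$, let $G_i=\{i\}\times[q]$ and $\G=\{G_1,\ldots,G_n\}$, and identify the point $\gamma_{j,i}$ with $(i,j)$. For each codeword $\vu\in\C$ define the block $B_\vu=\{(i,\vu_i):i\in\supp(\vu)\}$, and put $\B=\{B_\vu:\vu\in\C\}$. Since $\vu$ is recoverable from $B_\vu$, the map $\vu\mapsto B_\vu$ is a bijection, so $|\B|=|\C|$. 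Each $B_\vu$ meets every group at most once and has size $\|\vu\|=w$, so $(X,\B)$ is a $w$-uniform set system all of whose blocks are transverses of $\G$; condition (i) of an ${\rm H}$ design holds by construction. It then remains to verify condition (ii): every $t$-element transverse of $\G$ is contained in exactly one block.

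First I would show each $t$-element transverse lies in \emph{at most} one block, using the distance. For distinct $\vu,\vv\in\C$ with supports $A=\supp(\vu)$ and $A'=\supp(\vv)$, let $s$ be the number of positions $i$ with $\vu_i=\vv_i\neq 0$. Counting disagreements over $A\setminus A'$, $A'\setminus A$, and $A\cap A'$ gives $d_H(\vu,\vv)=2w-|A\cap A'|-s$, so the distance bound $d_H(\vu,\vv)\ge 2(w-t)+1$ forces $|A\cap A'|+s\le 2t-1$. Now if a $t$-element transverse $T$ lay in both $B_\vu$ and $B_\vv$, then $\vu$ and $\vv$ would agree with a common nonzero value on the $t$ groups of $T$, whence $s\ge t$ and, as $s\le|A\cap A'|$, $|A\cap A'|+s\ge 2t$, a contradiction.

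Finally I would close the gap by double counting. The number of $t$-element transverses of $\G$ is $q^t\binom{n}{t}$, and each block, being a $w$-element transverse, contains exactly $\binom{w}{t}$ of them. Counting incidences between blocks and the $t$-transverses they contain, the ``at most one'' conclusion yields $|\B|\binom{w}{t}\le q^t\binom{n}{t}$, which is precisely the Johnson bound. The divisibility hypothesis $\binom{w}{t}\mid q^t\binom{n}{t}$ makes $q^t\binom{n}{t}/\binom{w}{t}$ an integer, and the optimality hypothesis $|\C|=q^t\binom{n}{t}/\binom{w}{t}$ forces the inequality to be an equality, so every $t$-element transverse lies in exactly one block. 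Hence $(X,\G,\B)$ is an ${\rm H}(n,q,w,t)$ design, and since the code it produces under the original construction is exactly $\C$, which has distance $2(w-t)+1$, it is a ${\rm GS}(t,w,n,q)$. The only real content is the distance computation establishing ``at most one''; once that is in place, the equality case of the Johnson bound supplies the rest, so I expect no serious obstacle beyond carefully tracking the incidence count.
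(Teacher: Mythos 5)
Your proposal is correct and follows essentially the same route as the paper: build the blocks $B_\vu$ from codewords, rule out two blocks sharing a $t$-element transverse via the distance bound, and then use the size of $\C$ to upgrade ``at most one'' to ``exactly one.'' The only difference is that you spell out the details the paper leaves implicit (the exact formula $d_H(\vu,\vv)=2w-|A\cap A'|-s$ and the explicit incidence count $|\B|\binom{w}{t}\le q^t\binom{n}{t}$), which makes the argument more airtight but not different in substance.
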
 

\begin{proof}
Let $\C$ be an (optimal) $(n,2(w-t)+1,w)_{q+1}$-code of size $q^t\binom{n}{t}/\binom{w}{t}$.
Define
\begin{align*}
X &= \{ (i,j) : i\in [n] \text{ and } j\in[q] \} \\
\G &= \{G_i : i\in[n]\},
\end{align*}
where $G_i=\{ (i,j) : j\in[q]\}$. We associate with each codeword $\vu\in\C$ a block
$B^\vu\subseteq X$ as follows:
\begin{equation*}
B^\vu = \{ (i,j): \vu_i=j, i\in[n], j\in[q]\}.
\end{equation*}
Finally, let $\B=\{B^\vu:\vu\in\C\}$.

We claim that $(X,\G,\B)$ is a ${\rm GS}(t,w,n,q)$. Indeed, $|B|=w$ for all $B\in\B$,
and $|B\cap G_i|\leq 1$ for all $B\in\B$ and $i\in[n]$. Hence, it remains to show that any
$t$-element transverse of $\G$ is contained in exactly one block of $\B$. Suppose
$B^\vu$ and $B^\vv$ are two different blocks containing a particular
$t$-element transverse of $\G$. Then $|\supp(\vu)\cap\supp(\vv)| \geq t$,
implying $d_H(\vu,\vv)\leq 2(w-t) < 2(w-t)+1$, a contradiction.
Therefore, any $t$-element transverse of $\G$ is contained in at most one block, and
hence in exactly one block, since $|\B|=|\C|=q^t\binom{n}{t}/\binom{w}{t}$.
\end{proof}
\vskip 10pt

\begin{corollary}
\label{GSequivalence}
Suppose that $\binom{w}{t}|q^t\binom{n}{t}$.  Then there exists a ${\rm GS}(t,w,n,q)$
if and only if 
\begin{equation*}
A_{q+1}(n,2(w-t)+1,w) = q^t\frac{\binom{n}{t}}{\binom{w}{t}}.
\end{equation*}
\end{corollary}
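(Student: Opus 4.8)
The plan is to observe that the asserted equivalence splits into two implications, both of which are already in hand from the material immediately preceding the statement; the corollary is therefore an immediate consequence and requires no genuinely new argument. I would structure the proof as a one-paragraph assembly of these two facts.

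For the forward implication (existence of a ${\rm GS}(t,w,n,q)$ implies the stated equality), I would invoke the explicit construction described just before the preceding Proposition. From any ${\rm GS}(t,w,n,q)$ $(X,\G,\B)$ one reads off a code $\C\subseteq\bbZ_{q+1}^n$ with exactly one codeword per block, so that $|\C|$ equals the block count $q^t\binom{n}{t}/\binom{w}{t}$. Because this code has constant weight $w$ and distance $2(w-t)+1$ (the defining property of a generalized Steiner system), and because the Johnson bound gives $A_{q+1}(n,2(w-t)+1,w)\le q^t\binom{n}{t}/\binom{w}{t}$, the code is optimal and the equality follows. I would note that this direction needs no divisibility hypothesis: whenever a design exists its block count $q^t\binom{n}{t}/\binom{w}{t}$ is automatically an integer.

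For the reverse implication (the equality implies existence of a ${\rm GS}(t,w,n,q)$), I would cite the preceding Proposition directly. Its hypotheses are exactly $\binom{w}{t}\mid q^t\binom{n}{t}$ together with $A_{q+1}(n,2(w-t)+1,w)=q^t\binom{n}{t}/\binom{w}{t}$, and its conclusion is the existence of the desired system, obtained by turning an optimal code into a design whose blocks are the (labelled) supports of the codewords. Conjoining the two implications yields the biconditional.

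There is no real obstacle here; the only point meriting a word of care is the role of the divisibility condition. It is used only in the reverse direction---to guarantee that the candidate design has an integral number of blocks equal to $|\C|$---whereas the forward direction holds unconditionally. Stating both directions under the common hypothesis $\binom{w}{t}\mid q^t\binom{n}{t}$ is harmless and keeps the equivalence symmetric, so the corollary follows by simply combining the two established facts.
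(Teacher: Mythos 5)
Your proposal is correct and matches the paper's own route exactly: the forward direction is the block-to-codeword construction combined with the Johnson bound (stated in the paper just before the preceding Proposition), and the reverse direction is precisely that Proposition, so the corollary is the conjunction of the two. Your remark that the divisibility hypothesis is needed only for the reverse implication is also accurate and consistent with how the paper uses it.
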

\vskip 10pt

Etzion \cite[Problem 7]{Etzion:1997}
raised the following as an open problem for further research.

\vskip 10pt
\begin{problem}[Etzion]
\label{etzionproblem}
Given $k$ and $w$, show that there exists an $n_0$ such that for all $n\geq n_0$,
where $w|nk$, a ${\rm GS}(1,w,n,k)$ exists.
\end{problem}
\vskip 10pt

The following result, which is a direct consequence of Main Theorem 2 and Corollary
\ref{GSequivalence}, solves Problem \ref{etzionproblem}.

\vskip 10pt
\begin{theorem}
There exists a ${\rm GS}(1,w,n,k)$ for all sufficiently large $n$ satisfying $w|nk$.
\end{theorem}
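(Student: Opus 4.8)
The plan is to recognize this theorem as the $t=1$ specialization of Corollary \ref{GSequivalence}, with the required code-size equality supplied by Main Theorem 2. First I would fix $w$ and $k$, set $t=1$, and identify the group-size parameter $q$ of Corollary \ref{GSequivalence} with $k$. Under this identification the divisibility hypothesis $\binom{w}{t}\mid q^{t}\binom{n}{t}$ of the corollary reads $\binom{w}{1}\mid k\binom{n}{1}$, that is, $w\mid kn$, which is exactly the hypothesis of the present theorem; and the equality the corollary tests,
\begin{equation*}
A_{q+1}\bigl(n,2(w-t)+1,w\bigr)=q^{t}\frac{\binom{n}{t}}{\binom{w}{t}},
\end{equation*}
becomes $A_{k+1}(n,2w-1,w)=kn/w$, since $2(w-1)+1=2w-1$ and $\binom{n}{1}\big/\binom{w}{1}=n/w$.

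Next I would invoke Main Theorem 2 with alphabet size $k+1$ in place of $q$, so that $q-1=k$. That theorem asserts $A_{k+1}(n,2w-1,w)=kn/w$ for all sufficiently large $n$ satisfying $w\mid kn$. Hence, for every such $n$ the code-size equality demanded by Corollary \ref{GSequivalence} holds, and the corollary then furnishes a ${\rm GS}(1,w,n,k)$. Taking $n_0$ to be the threshold provided by Main Theorem 2 yields the $n_0$ sought in Problem \ref{etzionproblem}.

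The substance lies not in this final step, which is purely a translation between the language of generalized Steiner systems and that of constant-weight codes, but in the machinery already assembled: Corollary \ref{GSequivalence} encodes the combinatorial equivalence, while Main Theorem 2 --- whose proof runs through generalized difference triangle sets, $\lbar$-arrays, and Golomb rulers --- produces the optimal codes meeting the Johnson bound. The only points demanding care are the bookkeeping of the alphabet shift (working with $k+1$ symbols so that $q-1=k$) and verifying that the two divisibility conditions genuinely coincide when $t=1$; once these are checked, no further obstacle remains.
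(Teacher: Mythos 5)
Your proposal is correct and follows exactly the paper's own argument: invoke Main Theorem~2 with alphabet size $k+1$ (so $q-1=k$) to get $A_{k+1}(n,2w-1,w)=kn/w$ for all sufficiently large $n$ with $w\mid kn$, then apply Corollary~\ref{GSequivalence} at $t=1$ to translate this equality into the existence of a ${\rm GS}(1,w,n,k)$. The extra bookkeeping you carry out (matching the divisibility condition $\binom{w}{1}\mid k\binom{n}{1}$ with $w\mid kn$ and simplifying $q^{t}\binom{n}{t}/\binom{w}{t}$ to $kn/w$) is precisely what the paper leaves implicit.
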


\begin{proof}
By Main Theorem 2, we have
\begin{equation*}
A_{k+1}(n,2w-1,w)=kn/w,
\end{equation*}
for all sufficiently large $n$ satisfying $w|kn$. It follows immediately from
Corollary \ref{GSequivalence} that there also exists a ${\rm GS}(1,w,n,k)$
for all sufficiently large $n$ satisfying $w|kn$.
\end{proof}

\section{Explicit Bounds}

Main Theorem 1 and Main Theorem 2 are asymptotic statements: the hypothesis
that $n$ is sufficiently large must be satisfied. But how large must $n$ be?
More precisely, for a partition
$\overline{w}=\llb w_1,\ldots,w_{q-1}\rrb$ and a positive integer $w$,
define
\begin{align*}
N_{\rm ccc} &(\overline{w}) =\\
\min&\left\{n_0:A_q\left(n,2\sum{\overline{w}}-1,\overline{w}\right)=
\left\lfloor \frac{n}{w_1}\right\rfloor \text{ for all }
n\geq n_0\right\}.
\end{align*}
and
\begin{align*}
N_{\rm cwc}&(w) = \\
\min&\biggl\{n_0:A_q(n,2w-1,w)=\frac{(q-1)n}{w} \text{ for all }
n\geq n_0  \\
& \phantom{\biggl\{n_0:A_q(n,2w-1,w)=aaa}\text{ satisfying } w|(q-1)n \biggr\}.
\end{align*}
We give explicit bounds on $N_{\rm ccc}(\overline{w})$ and $N_{\rm cwc}(w)$ in this section.

\subsection{Bounds on $N_{\rm ccc}(\overline{w})$}

The proof of Main Theorem 1 in Section \ref{MT1} shows that
\begin{equation}
\label{ub}
N_{\rm ccc}(\overline{w})\leq 2w_1(\sum\overline{w}-1)\wp(\sum\overline{w}-1)+1.
\end{equation}
By Bertrand's postulate, $\wp(x)\leq 2x$ for all $x\geq 1$. For $x$ sufficiently large, better
asymptotic bounds on $\wp(x)$ exist (see for example, \cite{Bakeretal:2001}), but we are after quantifiable bounds. This implies
\begin{equation*}
N_{\rm ccc}(\overline{w}) \leq 4w_1\left(\sum\overline{w}-1\right)^2+1.
\end{equation*}
We now prove a lower bound on $N_{\rm ccc}(\overline{w})$.

\vskip 10pt
\begin{proposition}
\label{lb}
Let $\overline{w}=\llb w_1,\ldots,w_{q-1}\rrb$ be a partition.
If $w_1|n$ and there exists an $\nwbq$-code of size $n/w_1$, then 
$n \geq w_1^2k(k-1)+w_1$, where $k=\left \lfloor \sum \wbar/w_1\right \rfloor$.
In particular, when $w_1=w_2=\cdots=w_{q-1}$, we have $n \geq w_1+w_1^2(q-1)(q-2)$.
\end{proposition}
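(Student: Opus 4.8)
The plan is to derive the bound from a single double-counting (packing) argument that uses only condition (C1). Since the code $\C$ has constant weight $w:=\sum\wbar$ and distance $2\sum\wbar-1=2w-1$, it satisfies (C1): the supports of any two distinct codewords meet in at most one point. Write $\C=\{\vu^{(1)},\ldots,\vu^{(N)}\}$ with $N=n/w_1$ (an integer, since $w_1\mid n$), and put $S_i=\supp(\vu^{(i)})$, so that $|S_i|=w$ for every $i$. For each point $x$ of the index set (of size $n$), let $d(x)$ be the number of supports containing $x$.

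First I would count incidences of intersecting pairs two ways. Counting triples $(\{i,j\},x)$ with $x\in S_i\cap S_j$ gives
\[
\sum_x \binom{d(x)}{2}\;=\;\sum_{i<j}\left|S_i\cap S_j\right|\;\le\;\binom{N}{2},
\]
where the inequality is precisely (C1). On the other hand $\sum_x d(x)=\sum_i|S_i|=Nw$, so the average degree over the $n$ points is $\bar d=Nw/n=w/w_1$.

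Next I would invoke convexity. Since $t\mapsto\binom{t}{2}=t(t-1)/2$ is convex, Jensen's inequality yields
\[
\sum_x\binom{d(x)}{2}\;\ge\; n\binom{\bar d}{2}\;=\;n\binom{w/w_1}{2},
\]
which is non-vacuous because $\bar d=w/w_1\ge 1$ (as $w_1$ is a part of $\wbar$, we have $w=\sum\wbar\ge w_1$). Chaining the two displays gives $n\binom{w/w_1}{2}\le\binom{n/w_1}{2}$; clearing denominators and dividing by $n/w_1>0$ reduces this to $n\ge w_1^2\,(w/w_1)(w/w_1-1)+w_1$. Finally, since $w/w_1\ge k=\lfloor\sum\wbar/w_1\rfloor\ge 1$ and $t(t-1)$ is increasing for $t\ge 1$, I would weaken the real ratio $w/w_1$ to its floor $k$ to obtain $n\ge w_1^2 k(k-1)+w_1$. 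The special case $w_1=\cdots=w_{q-1}$ gives $\sum\wbar=(q-1)w_1$, hence $k=q-1$ and $n\ge w_1^2(q-1)(q-2)+w_1$.

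This argument is essentially routine rather than hard; the points needing care are getting the direction of Jensen's inequality right (a \emph{lower} bound on $\sum_x\binom{d(x)}{2}$ played against the \emph{upper} bound from (C1)), confirming $\bar d\ge 1$ so the convexity estimate is not vacuous, and the monotonicity step replacing $w/w_1$ by $\lfloor\sum\wbar/w_1\rfloor$. It is worth remarking that the proof uses neither (C2) nor the full composition $\wbar$ beyond the weight $w=\sum\wbar$ and the code size $N=n/w_1$; and that the bound is already tight for $\wbar=\llb 1,1,1\rrb$, where it returns $n\ge 7$, matching $A_4(n,5,\llb 1,1,1\rrb)=n$ for $n\ge 7$.
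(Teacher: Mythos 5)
Your proof is correct and follows essentially the same route as the paper's: both double-count the incidences of pairs of codewords with common support points over the $n$ coordinates, bound $\sum_x\binom{d(x)}{2}$ above by $\binom{n/w_1}{2}$ via the intersection condition (C1), and bound it below by convexity to extract $n\ge w_1^2k(k-1)+w_1$. The only cosmetic difference is that you apply Jensen at the real average $w/w_1$ and then pass to the floor $k$, whereas the paper minimizes $\sum_i N_i(N_i-1)$ directly over integer column counts by writing $\sum_i N_i=kn+r$ with the $N_i$ as equal as possible; both weakenings land on the same final bound.
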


\begin{proof}
Let $\C=\{\vu^{(1)},\ldots,\vu^{(n/w_1)}\}$ be an $\nwbq$-code of size $n/w_1$. 
Then $\C$ can be regarded as
an $n/w_1\times n$ matrix $\vC$, 
whose $i$th row is $\vu^{(i)}$, $i\in[n/w_1]$. Let $N_i$
be the number of nonzero entries in column $i$ of $\vC$. 
Then $\sum_{i=1}^n N_i=(n\swb)/w_1$. 
In each column of $\vC$, we associate each pair of distinct nonzero entries with the pair of rows
that contain these entries. There are $\binom{N_i}{2}$ such pairs of nonzero entries in column 
$i$ of $\vC$. Therefore, there are $\sum_{i=1}^n \binom{N_i}{2}$ such pairs in all the
columns of $\vC$. Since there are no pairs of distinct
codewords in $\C$ whose supports intersect in two 
elements, the $\sum_{i=1}^n \binom{N_i}{2}$ pairs of rows associated with the 
$\sum_{i=1}^n \binom{N_i}{2}$ pairs of distinct nonzero entries are also all distinct. Hence,
\begin{equation*}
\sum_{i=1}^n \binom{N_i}{2} \leq \binom{|\C|}{2}=\binom{n/w_1}{2},
\end{equation*}
or equivalently,
\begin{equation}
\sum_{i=1}^n N_i(N_i-1) \leq \dfrac{n(n-w_1)}{w_1^2}.
\label{T20}
\end{equation}
Since $k=\lfloor \sum \wbar/w_1 \rfloor=\lfloor ((n\sum \wbar)/w_1)/n \rfloor$, there exists
$r\in[0,n-1]$ such that
\begin{equation*}
\dfrac{n\swb}{w_1}=kn + r. 
\end{equation*}
As $\sum_{i=1}^n N_i=(n\swb)/w_1$ we have
\begin{align}
\sum_{i=1}^n N_i(N_i-1) &\geq r(k+1)k+ (n-r)k(k-1) \nonumber \\
&\geq nk(k-1).
\label{T21}
\end{align}
From (\ref{T20}) and (\ref{T21}), we have
\begin{equation*}
\dfrac{n(n-w_1)}{w_1^2} \geq nk(k-1),
\end{equation*}
giving $n \geq w_1^2k(k-1)+w_1$.
\end{proof}

\vskip 10pt
\begin{corollary}
\label{lub}
\begin{align*}
\left(\sum\overline{w}\right)^2-w_1\left(\sum\overline{w}-1\right) &
\leq \\
N&_{\rm ccc}(\overline{w}) \\
&\leq 4w_1\left(\sum\overline{w}-1\right)^2+1.
\end{align*}
\end{corollary}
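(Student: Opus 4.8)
The upper bound needs no new argument. Inequality (\ref{ub}) already gives $N_{\rm ccc}(\overline{w})\le 2w_1(\sum\overline{w}-1)\wp(\sum\overline{w}-1)+1$, and feeding in Bertrand's postulate $\wp(x)\le 2x$ turns this into $N_{\rm ccc}(\overline{w})\le 4w_1(\sum\overline{w}-1)^2+1$, which is the right-hand inequality. So the plan is to prove only the lower bound; throughout write $S=\sum\overline{w}$.

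For the lower bound the plan is to exhibit a single length at which the Johnson bound is \emph{not} met. Since $N_{\rm ccc}(\overline{w})$ is by definition the least $n_0$ past which optimality holds for every larger length, a lone failure at some $n^\ast$ forces $N_{\rm ccc}(\overline{w})>n^\ast$. I will take $n^\ast=S^2-w_1(S-1)-1$, so that success gives $N_{\rm ccc}(\overline{w})\ge S^2-w_1(S-1)$. To manufacture the failure I reuse the counting device behind Proposition \ref{lb}: assume an $(n^\ast,2S-1,\overline{w})_q$-code $\C$ of the optimal size $M=\lfloor n^\ast/w_1\rfloor$ exists (in the multi-part regime of interest $M\ge 2$, so the count below bites), regard it as an $M\times n^\ast$ matrix, and let $N_i$ be the number of nonzero entries in its $i$th column. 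Then $\sum_i N_i=MS$, and condition (C1) --- distinct codewords meet in at most one support position --- yields $\sum_i\binom{N_i}{2}\le\binom{M}{2}$. The aim is to contradict this inequality.

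The contradiction comes from lower-bounding $\sum_i\binom{N_i}{2}$ by its value on the balanced column distribution (column weights pairwise differing by at most one), the integer minimum subject to $\sum_i N_i=MS$. Using $\sum_i\binom{N_i}{2}=\tfrac12(\sum_i N_i^2-MS)$ together with $\sum_i N_i^2\ge (MS)^2/n^\ast$ (Cauchy--Schwarz), the integer minimum exceeds the continuous one by exactly $s(n^\ast-s)/(2n^\ast)$, where $s=MS\bmod n^\ast$. A direct computation shows the continuous bound is \emph{exactly tight} at the transition length $S^2-w_1(S-1)$: there $MS^2=n(M+S-1)$, i.e.\ the Cauchy--Schwarz minimum equals $\binom{M}{2}$. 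Consequently at $n^\ast$, one step below the transition, the strict excess required for the contradiction must be supplied by the integrality correction $s(n^\ast-s)/(2n^\ast)$ together with the floor defect $w_1M-n^\ast$.

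Pinning down this last step is where the real work lies, and it is the step I expect to be the main obstacle. The continuous relaxation alone does not suffice: the bound it yields, $M\le n^\ast(S-1)/(S^2-n^\ast)$, can still be as large as $\lfloor n^\ast/w_1\rfloor$, so it fails to exclude the Johnson value by itself, and the argument must genuinely track the two residues $\rho=n^\ast\bmod w_1$ and $s=MS\bmod n^\ast$. Clearing denominators, the desired strict inequality $\sum_i\binom{N_i}{2}>\binom{M}{2}$ reduces to showing $M\bigl(M-\rho(S-1)\bigr)+s(n^\ast-s)>0$. I would establish this in the spirit of Proposition \ref{lb}, splitting on the residue $S\bmod w_1$ and using $0\le\rho<w_1$ and $0\le s<n^\ast$; in the special case $w_1\mid S$ it collapses to Proposition \ref{lb}'s bound $w_1^2k(k-1)+w_1=S^2-w_1(S-1)$ with $k=\lfloor S/w_1\rfloor$, while for $w_1\nmid S$ the same computation gives the strictly larger value $S^2-w_1(S-1)$, completing the lower bound.
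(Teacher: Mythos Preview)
Your upper bound argument is fine and matches the paper exactly: inequality (\ref{ub}) plus Bertrand's postulate.

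Your lower bound plan has a genuine gap. You commit to showing failure of the Johnson bound at the single length $n^\ast=S^2-w_1(S-1)-1$, and you reduce this to the strict inequality $M\bigl(M-\rho(S-1)\bigr)+s(n^\ast-s)>0$. But this inequality is not always strict. Take $\overline{w}=\llb 4,2\rrb$, so $w_1=4$, $S=6$, $n^\ast=15$, $M=\lfloor 15/4\rfloor=3$, $\rho=3$, and $s=MS\bmod n^\ast=18\bmod 15=3$. Then
\[
M\bigl(M-\rho(S-1)\bigr)+s(n^\ast-s)=3(3-15)+3\cdot 12=-36+36=0,
\]
so your contradiction never materialises. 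Worse, the paper's own Table~\ref{smallcodetable} records $A_3(15,11,\llb 4,2\rrb)=3=\lfloor 15/4\rfloor$: the Johnson bound \emph{is} attained at $n^\ast=15$, so no argument of this shape can produce a failure there. Your final paragraph asserts that the case split on $S\bmod w_1$ will finish things, but this example has $w_1\nmid S$ and still defeats the plan.

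The paper treats the lower bound as an immediate consequence of Proposition~\ref{lb}, and the intended reading is to stay with multiples of $w_1$ rather than descend to $n^\ast$. In the proof of Proposition~\ref{lb}, replace the integer bound (\ref{T21}) by the convexity (Cauchy--Schwarz) bound $\sum_i N_i(N_i-1)\ge n\cdot(S/w_1)(S/w_1-1)$; combined with (\ref{T20}) this yields directly $n\ge S^2-w_1(S-1)$ whenever $w_1\mid n$ and a code of size $n/w_1$ exists. This is the source of the quantity $S^2-w_1(S-1)$, and it avoids the residue bookkeeping in $\rho$ and $s$ that tripped up your approach.
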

\vskip 10pt

The upper and lower bounds on $N_{\rm ccc}(\overline{w})$ in
Corollary \ref{lub} differ approximately by a
factor of $4w_1$.

\subsection{Bounds on $N_{\rm cwc}(w)$}


The proof of Main Theorem 2 in Section \ref{MT2} shows that
$N_{\rm cwc}(w)\leq 2w(w(q-1)-1)^2+1$.

For constant-weight codes, the following result of 
Etzion \cite[Theorem 1]{Etzion:1997} gives $N_{\rm cwc}(w)\geq (w-1)(q-1)+1$.

\vskip 10pt
\begin{proposition}
\label{Etzionbound}
Given $q$ and $w$, if there exists an optimal $\nwq$-code of size $(q-1)n/w$, 
then $n \geq (w-1)(q-1)+1$. 
\end{proposition}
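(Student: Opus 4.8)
The plan is to mirror the counting argument used in the proof of Proposition \ref{lb}, but to exploit the constant-weight structure through a position count rather than a pair count. Arrange the codewords of the hypothesized code $\C$ as the rows of a $|\C|\times n$ matrix, and let $N_i$ denote the number of nonzero entries in column $i$. Since each codeword has weight $w$, counting nonzero entries by rows gives $\sum_{i=1}^n N_i=w|\C|=(q-1)n$.

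The first key step is to show that every column is full, i.e.\ $N_i=q-1$ for all $i$. The nonzero entries in a single column take values in $\bbZ_q\setminus\{0\}$, and by condition (C2) any two codewords that are both nonzero in column $i$ must carry distinct values there; hence these values are pairwise distinct and $N_i\le q-1$. Since the $N_i$ average to exactly $q-1$ over the $n$ columns while each is at most $q-1$, we are forced to have $N_i=q-1$ for every $i$. In particular, in each column every nonzero symbol of $\bbZ_q$ occurs exactly once.

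The second step counts positions. Fix any codeword $\vu\in\C$ and any $i\in\supp(\vu)$. Because column $i$ is full, there are exactly $q-2$ codewords other than $\vu$ that are nonzero in column $i$; call this set $\mathcal{M}$. By condition (C1), each $\vv\in\mathcal{M}$ meets $\vu$ only in position $i$, so the remaining $w-1$ support positions of $\vv$ lie outside $\supp(\vu)$; and again by (C1) any two members of $\mathcal{M}$ meet only in position $i$, so these remaining supports are pairwise disjoint and disjoint from $\supp(\vu)$. Consequently the code occupies at least $|\supp(\vu)|+(q-2)(w-1)=w+(q-2)(w-1)=(w-1)(q-1)+1$ distinct positions, giving $n\geq (w-1)(q-1)+1$.

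The crux of the argument is the averaging step that promotes the inequality $N_i\le q-1$ to the equality $N_i=q-1$; this is precisely where the hypothesis that $\C$ has size $(q-1)n/w$ is used, and it is what guarantees that column $i$ supplies the full complement of $q-2$ disjointly supported codewords needed for the position count. The remaining disjointness bookkeeping is an immediate application of (C1) and requires no further estimation.
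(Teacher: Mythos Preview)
Your argument is correct. The paper does not actually prove this proposition; it attributes the result to Etzion \cite[Theorem 1]{Etzion:1997} and merely records the statement for the purpose of bounding $N_{\rm cwc}(w)$ from below. So there is no proof in the paper against which to compare your approach.

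On its own merits, your proof is sound. The averaging step is valid: each $N_i\le q-1$ by (C2), and the hypothesis $|\C|=(q-1)n/w$ forces $\sum_i N_i=(q-1)n$, whence $N_i=q-1$ for all $i$. The position count is also correct: for $\vv_1,\vv_2\in\mathcal{M}$ distinct, both contain $i$ in their supports, so (C1) forces $\supp(\vv_1)\cap\supp(\vv_2)=\{i\}$ exactly, and likewise $\supp(\vv)\cap\supp(\vu)=\{i\}$ for each $\vv\in\mathcal{M}$. Hence the sets $\supp(\vu)$ and $\supp(\vv)\setminus\{i\}$ for $\vv\in\mathcal{M}$ are pairwise disjoint, occupying $w+(q-2)(w-1)=(w-1)(q-1)+1$ positions in total. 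One minor stylistic point: your final paragraph is commentary rather than proof and could be dropped.
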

\vskip 10pt

There is a considerable gap between these upper and lower bounds on 
$N_{\rm cwc}(w)$.
However, when $w|n$, a better upper bound can be obtained. We describe the construction
below. 

The idea of the construction is similar to the idea of the previous ones. We
determine $q-1$ base codewords, denoted $\vg^{(1)},\ldots ,\vg^{(q-1)}$, for which the
$(n/w)$-quasicyclic code
\begin{equation*}
\mathcal{C}=\{ T^{wj}(\vg^{(i)}):i\in[q-1],j\in[0,n/w-1]\} .
\end{equation*}
is an $(n,2w-1,w)_q$-code.
Let us write $\vu \stackrel{\text{T}}{\leftarrow} \vg^{(i)}$ if $\vu=T^{wj}(\vg^{(i)})$ for some $j$.
Suppose that $\vg^{(i)}\in\{0,i\}^n$, $i\in[q-1]$. 
Then $\mathcal{C}$ is an $(n,2w-1,w)_{q}$-code if the following two conditions hold.

\begin{description}[\setlabelwidth{(C8)}\usemathlabelsep]
\item[(C8)] $|\supp(\vu,\vv)|=0$ if 
$\vu \stackrel{\text{T}}{\leftarrow}\vg^{(i)}$ and 
$\vv \stackrel{\text{T}}{\leftarrow} \vg^{(i)}$ for some $i$.
\item[(C9)] $|\supp(\vu,\vv)| \leq 1$ if $\vu \stackrel{\text{T}}{\leftarrow}\vg^{(i)}$ and
$\vv \stackrel{\text{T}}{\leftarrow} \vg^{(j)}$ for $i\neq j$.
\end{description}

We observe that (C8) holds immediately if for every $i\in[q-1]$,
$\vg^{(i)}$ is chosen so that $\supp(\vg^{(i)})$ contains $w$ elements which are
congruent to $0,1,\ldots ,w-1 \pmod{w}$, respectively.

\vskip 10pt
\begin{theorem}
\label{MyTheorem3}
If $w|n$ and $n\geq w((w-1)(q-2)+1)$,
then 
$A_{q}(n,2w-1,w)=(q-1)n/w$.
\end{theorem}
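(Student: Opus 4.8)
The plan is to prove the two bounds separately. The upper bound $A_q(n,2w-1,w)\le (q-1)n/w$ is immediate from the Johnson bound: a weight-$(w-1)$ code of distance $2w-1=2(w-1)+1$ satisfies $A_q(n-1,2w-1,w-1)=1$, so a single application of Proposition~\ref{johnson1} yields $A_q(n,2w-1,w)\le\lfloor n(q-1)/w\rfloor=(q-1)n/w$, using $w\mid n$. It then remains to construct a code of this size, and for this I would use the quasicyclic machinery set up just before the theorem: it suffices to produce base codewords $\vg^{(1)},\ldots,\vg^{(q-1)}$ with $\vg^{(i)}\in\{0,i\}^n$, each of weight $w$, satisfying conditions (C8) and (C9).

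For the construction, write $M=n/w$ and, for $i\in[q-1]$ and $r\in\{0,1,\ldots,w-1\}$, define
\begin{equation*}
s^{(i)}_r=r+w\bigl((i-1)r \bmod M\bigr),
\end{equation*}
taking $\supp(\vg^{(i)})=\{s^{(i)}_0,\ldots,s^{(i)}_{w-1}\}\subseteq\bbZ_n$. Each $s^{(i)}_r$ lies in $[0,n-1]$ and is congruent to $r\pmod w$, so every support contains exactly one element in each residue class modulo $w$; these $w$ elements are therefore distinct, $\vg^{(i)}$ has weight $w$, and (C8) holds by the observation preceding the theorem.

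It then remains to verify (C9). The first step is to recast it as a statement about differences. Since $wa$ and $wb$ are multiples of $w$, any position shared by $T^{wa}(\vg^{(i)})$ and $T^{wb}(\vg^{(j)})$ must lie in a common residue class $r$ modulo $w$; hence two such shifts meet in more than one point precisely when $s^{(i)}_r-s^{(j)}_r\equiv s^{(i)}_{r'}-s^{(j)}_{r'}\pmod n$ for some $r\neq r'$. Thus (C9) is equivalent to injectivity of $r\mapsto s^{(i)}_r-s^{(j)}_r\pmod n$ for each pair $i\neq j$. Here the linear choice pays off: reducing modulo $n=wM$ gives
\begin{equation*}
s^{(i)}_r-s^{(j)}_r\equiv w(i-j)r\pmod n,
\end{equation*}
so a coincidence for $r\neq r'$ would force $w(i-j)(r-r')\equiv0\pmod n$, that is, $(i-j)(r-r')\equiv0\pmod M$. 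But $0<|(i-j)(r-r')|\le(q-2)(w-1)<M$, the last inequality being exactly the hypothesis $n\ge w\bigl((w-1)(q-2)+1\bigr)$; hence no such coincidence occurs and (C9) holds. Consequently $\mathcal{C}$ is an $(n,2w-1,w)_q$-code, and since (C8) makes the $n/w$ shifts of each base codeword pairwise disjoint in support while distinct nonzero values separate the $q-1$ families, all $(q-1)(n/w)$ codewords are distinct, so $|\mathcal{C}|=(q-1)n/w$ meets the upper bound.

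The main obstacle I expect is the third paragraph: reformulating the combinatorial condition (C9) as injectivity of the difference map, and then choosing support positions whose largest relevant difference $w(q-2)(w-1)$ is controlled by $n$. The linear slope $(i-1)r$ is what makes this threshold tight, since $n\ge w((w-1)(q-2)+1)$ is precisely the condition $M>(q-2)(w-1)$ needed to rule out every difference collision for this construction; a less structured placement of the support elements would in general force a larger $n$.
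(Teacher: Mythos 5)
Your proposal is correct and takes essentially the same approach as the paper: the identical linear support placement (the element of residue class $r$ sits at $r+r(i-1)w$, which is exactly your $s^{(i)}_r$ after reduction mod $n$), with (C9) verified by showing that a double intersection forces $(i-j)(r-r')\equiv 0\pmod{n/w}$, impossible since $0<|(i-j)(r-r')|\le (q-2)(w-1)<n/w$. The only cosmetic differences are that you spell out the Johnson-bound upper bound, which the paper leaves implicit, and you phrase the (C9) check as injectivity of the difference map $r\mapsto s^{(i)}_r-s^{(j)}_r$ rather than as the paper's contradiction argument with two explicit common positions $a$ and $b$.
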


\begin{proof}
It suffices to show that there exists an $(n,2w-1,w)_{q}$-code of size 
$(q-1)n/w$ for any $n\geq w((w-1)(q-2)+1)$, $n \equiv 0\pmod{w}$. We
construct $q-1$ base codewords $\vg^{(1)},\ldots ,\vg^{(q-1)}$ for such a code
as follows. For $i\in[q-1]$, $\vg^{(i)}\in\{0,i\}^n$ satisfies
\begin{equation}
\begin{split}
\supp(\vg^{(i)})=\{ 0, 1+(i-&1)w, 2+2(i-1)w,\ldots ,\\
&(w-1)+(w-1)(i-1)w \}.
\end{split}
\label{T10}
\end{equation}

Condition (C8) is satisfied immediately.
It remains to show that these $q-1$ base codewords satisfy
(C9). We prove this by contradiction. Assume that there exist 
$\vu=T^{kw}(\vg^{(i)})$ and $\vv=T^{lw}(\vg^{(j)})$, $i\neq j$, so that 
$|\supp(\vu,\vv)|\geq 2$. Suppose that $a,b\in \supp(\vu,\vv)$ and 
$a\equiv x\pmod{w}$, $b\equiv y\pmod{w}$. By (\ref{T10}) we have
\begin{align*}
a &=x+x(i-1)w+kw\pmod{n} \\
&=x+x(j-1)w+\ell w\pmod{n},
\end{align*}
and 
\begin{align*}
b &=y+y(i-1)w+kw\pmod{n} \\
&=y+y(j-1)w+\ell w\pmod{n},
\end{align*}
where the terms $kw$ and $\ell w$ result from the cyclic shift operations
applied on $\vg^{(i)}$ and $\vg^{(j)}$. These equations imply
\begin{equation*}
xw(i-j)+(k-\ell)w\equiv 0\pmod{n}
\end{equation*}
and 
\begin{equation*}
yw(i-j)+(k-\ell)w\equiv 0\pmod{n},
\end{equation*}
which together yield
\begin{equation}
(x-y)(i-j)\equiv 0\pmod{n/w}.  \label{T11}
\end{equation}
However, since $0\leq x\neq y\leq w-1$ and $1\leq i\neq j\leq q-1$, we have
\begin{equation}
0<\left\vert (x-y)(i-j)\right\vert \leq (w-1)(q-2)<n/w,  \label{T12}
\end{equation}
as $n\geq w(1+(w-1)(q-2))$. Thus, (\ref{T11}) and (\ref{T12}) lead to a
contradiction.
\end{proof}
\vskip 10pt

\section{Tables for Small-Weight Constant-Composition Codes}

\begin{table*}
\caption{Linear size optimal $(n,2\sum\overline{w}-1,\overline{w})_q$-codes of weight at most six}
\label{codetable}
\centering
\begin{tabular}{| c | c | l | l | l | c | l |}
\hline
Weight & Distance & Composition $\overline{w}$ & Base codeword & Condition on length $n$ & Size & Remark \\
\hline
2 & 3 & $\llb 1,1\rrb$ & $12$ & $n\geq 3$ & $n$ & Trivial \\
\hline
3 & 5 & $\llb 2,1\rrb$ & $112$ & $n\geq 5$ & $\lfloor n/2 \rfloor$ & Trivial \\
& & $\llb 1,1,1\rrb$ & $1203$ & $n\geq  7$ & $n$ & \cite{Cheeetal:2007} \\
\hline
4 & 7 & $\llb 3,1\rrb$ & $1112$ & $n\geq 7$ & $\lfloor n/3 \rfloor$ & Trivial \\
& & $\llb 2,2\rrb$ & $112002$ & $n\geq 10$ &  $\lfloor n/2 \rfloor$ & This paper \\
& & $\llb 2,1,1\rrb$ & $112003$ & $n\geq 10$ & $\lfloor n/2 \rfloor$ & Refinement of $\llb 2,2\rrb$ \\
& & $\llb 1,1,1,1\rrb$ & $1200304$ & $n\geq 13$ & $n$ & This paper \\
\hline
5 & 9 & $\llb 4,1\rrb$ & $11112$ & $n \geq 9$ & $\lfloor n/4 \rfloor$ & Trivial\\
& & $\llb 3,2\rrb$ & $110200020001$ & $n \geq 15$ & $\lfloor n/3 \rfloor$ &This paper \\
& & $\llb 3,1,1\rrb$ & $110200030001$ & $n \geq 15$ & $\lfloor n/3 \rfloor$ & Refinement of $\llb 3,2\rrb$ \\
& & $\llb 2,2,1\rrb$ & $100120000203$ & $n \geq 18$ & $\lfloor n/2 \rfloor$ & This paper \\
& & $\llb 2,1,1,1\rrb$ & $100120000304$ & $n \geq 18$ & $\lfloor n/2 \rfloor$ & Refinement of $\llb 2,2,1\rrb$ \\
& & $\llb 1,1,1,1,1\rrb$ & $120030000405$ & $n\geq 23$ & $n$ & This paper \\
& & & $12003000000000405$ & $n = 21$ & 21 & This paper\\
\hline
6 & 11 & $\llb 5,1\rrb$ & $111112$ & $n \geq 11$ & $\lfloor n/5 \rfloor$ & Trivial \\
& & $\llb 4,2\rrb$ & $1111200002$ & $n \geq 20$ & $\lfloor n/4 \rfloor$ & This paper \\
& & $\llb 4,1,1\rrb$ & $1111200003$ & $n \geq 20$ & $\lfloor n/4 \rfloor$ & Refinement of $\llb 4,2\rrb$ \\
& & $\llb 3,3\rrb$  & $111200020002$ & $n \geq 21$ & $\lfloor n/3 \rfloor$ & This paper\\
& & $\llb 3,2,1\rrb$ & $111200020003$ & $n \geq 21$ & $\lfloor n/3 \rfloor$ & Refinement of $\llb 3,3\rrb$ \\
& & $\llb 3,1,1,1\rrb$ & $111200030004$ & $n \geq 21$ & $\lfloor n/3 \rfloor$ & This paper\\
& & $\llb 2,2,2\rrb$ & $1120020030000003$ & $n \geq 30$ or $n = 26$ & $\lfloor n/2 \rfloor$ & This paper\\
& & $\llb 2,2,1,1\rrb$ & $1120020030000004$ & $n \geq 30$ or $n = 26$ & $\lfloor n/2 \rfloor$ & Refinement of $\llb 2,2,2\rrb$ \\
& & $\llb 2,1,1,1,1\rrb$ & $1120030040000005$ & $n \geq 30$ or $n = 26$ & $\lfloor n/2 \rfloor$ & Refinement of $\llb 2,2,2\rrb$ \\
& & $\llb 1,1,1,1,1,1\rrb$ & $120030000040500006$ & $n\geq 35$ or $n = 31$ & $n$ & This paper \\
\hline
\end{tabular}
\end{table*}

\begin{table*}
\setlength{\tabcolsep}{4pt}
\caption{Sizes of some small optimal constant-composition codes with $d=2\sum\overline{w}-1$}
\label{smallcodetable}
\centering
\begin{tabular}{| l l | r | r | r | r | r | r | r | r | r | r | r | r | r |
 r | r | r | r | r | r | r | r | r | r | r | r | r | r |}
\hline
& $n$ & 6& 7& 8& 9& 10& 11& 12& 13& 14& 15& 16& 17& 18& 19& 20& 21& 22& 23& 24& 25& 26& 27& 28& 29& 30& 31& 32\\
$\wbar$ & & & & & &  & & & & & & & & & & & & & & & & & & & & & & \\
\hline
$\llb 1,1\rrb$ & & & & & & & & & & & & & & & & & & & & & & & & & & & & \\
\hline
$\llb 2,1\rrb$ & & & & & & & & & & & & & & & & & & & & & & & & & & & & \\
$\llb 1,1,1\rrb$ & & 4 &  & & & & & & & & & & & & & & & & & & & & & & & & & \\
\hline
$\llb 3,1\rrb$ & & 1 & & & & & & & & & & & & & & & & & & & & & & & & & & \\
$\llb 2,2\rrb$ & & 1& 2 & 2 & 3 & & & & & & & & & & & & & & & & & & & & & & & \\
$\llb 2,1,1\rrb$ & & 1 & 2 & 2 &  3& & & & & & & & & & & & & & & & & & & & & & &\\
$\llb 1,1,1,1\rrb$ & & 1 & 2 & 2 & 3 & 5 & 6 & 9 & & & & & & & & & & & & & & & & & & & &\\
\hline
$\llb 4,1\rrb$ & & 1 & 1 & 1 & & & & & & & & & & & & & & & & & & & & & & & & \\
$\llb 3,2\rrb$ & & 1 & 1 & 1 & 2 & 2 & 2 & 3 & 3 & 4 & & & & & & & & & & & & & & & & & & \\
$\llb 3,1,1\rrb$ & & 1 & 1 & 1 & 2 & 2 & 2 & 3 & 3 & 4 & & & & & & & & & & & & & & & & & & \\
$\llb 2,2,1\rrb$ & & 1 & 1 & 1 & 2 & 2 & 2 & 3 & 3 & 4 & 6 & 6 & 7 & & & & & & & & & & & & & & & \\
$\llb 2,1,1,1\rrb$ & & 1 & 1 & 1 & 2 & 2 & 2 & 3 & 3 & 4 & 6 & 6 & 7 & & & & & & & & & & & & & & & \\
$\llb1,1,1,1,1\rrb$ & & 1 & 1 & 1 & 2 & 2 & 2 & 3 & 3 & 4 & 6 & 6 & 7 & 9 & 12 & 16 & & 21 & & & & & & & & & & \\
\hline
$\llb 5,1\rrb$ & & 1 & 1 & 1 & 1 & 1 & & & & & & & & & & & & & & & & & & & & & & \\  
$\llb 4,2\rrb$ & & 1 & 1 & 1 & 1 & 1 & 2 & 2 & 2 & 2 & 3 & 3 & 3 & 4 & 4 & & & & & & & & & & & & & \\   
$\llb 4,1,1\rrb$ & & 1 & 1 & 1 & 1 & 1 & 2 & 2 & 2 & 2 & 3 & 3 & 3 & 4 & 4 & & & & & & & & & & & & & \\ 
$\llb 3,3\rrb$ & & 1 & 1 & 1 & 1 & 1 & 2 & 2 & 2 & 2 & 3 & 3 & 3 & 4 & 4 & 5 & & & & & & & & & & & & \\ 
$\llb 3,2,1\rrb$ & & 1 & 1 & 1 & 1 & 1 & 2 & 2 & 2 & 2 & 3 & 3 & 3 & 4 & 4 & 5 & & & & & & & & & & & & \\ 
$\llb 3,1,1,1\rrb$ & & 1 & 1 & 1 & 1 & 1 & 2 & 2 & 2 & 2 & 3 & 3 & 3 & 4 & 4 & 5 & & & & & & & & & & & & \\
$\llb 2,2,2\rrb$  & & 1 & 1 & 1 & 1 & 1 & 2 & 2 & 2 & 2 & 3 & 3 & 3 & 4 & 4 & 5 & 7 & 7 & 8 & 9 & 10 & & 13 & 14 & 14 & & & \\
$\llb 2,2,1,1\rrb$ & & 1 & 1 & 1 & 1 & 1 & 2 & 2 & 2 & 2 & 3 & 3 & 3 & 4 & 4 & 5 & 7 & 7 & 8 & 9 & 10 & & 13 & 14 & 14 & & & \\
$\llb 2,1,1,1,1\rrb$ & & 1 & 1 & 1 & 1 & 1 & 2 & 2 & 2 & 2 & 3 & 3 & 3 & 4 & 4 & 5 & 7 & 7 & 8 & 9 & 10 & & 13 & 14 & 14 & & & \\
$\llb 1,1,1,1,1,1\rrb$ & & 1 & 1 & 1 & 1 & 1 & 2 & 2 & 2 & 2 & 3 & 3 & 3 & 4 & 4 & 5 & 7 & 7 & 8 & 9 & 10 & 13 & 14 & 16 & 20 & 25 & & 31 \\
\hline
\end{tabular}
\end{table*}


In this section, we provide two tables of exact values of $A_q(n,2\sum \wbar-1,\wbar)$ 
with $\sum \wbar \leq 6$, for almost all $n$. The only undetermined values in this range
are $A_7(n,11,\llb 1,1,1,1,1,1\rrb)$ when $n\in\{33,34\}$. The following (trivial) upper bound 
happens to be very useful when we build up the tables, as it is often tight for codes of small 
lengths. 

\vskip 10pt
\begin{lemma}
\label{trivialbound}
$A_q(n,2\sum \wbar-1,\wbar) \leq A_2(n,2\sum \wbar -2, \sum \wbar)$. 
\end{lemma}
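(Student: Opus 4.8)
The plan is to prove Lemma \ref{trivialbound}, namely
$A_q(n,2\sum\wbar-1,\wbar)\le A_2(n,2\sum\wbar-2,\sum\wbar)$,
by exhibiting a map that sends any $q$-ary constant-composition code of the left-hand parameters to a binary constant-weight code of the right-hand parameters without decreasing cardinality. First I would take an optimal $(n,2\sum\wbar-1,\wbar)_q$-code $\C\subseteq\bbZ_q^X$ and apply the most natural ``forgetful'' operation: replace each codeword $\vu$ by its \emph{support indicator} $\phi(\vu)\in\bbZ_2^X$, defined by $\phi(\vu)_x=1$ if $\vu_x\ne 0$ and $\phi(\vu)_x=0$ otherwise. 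Since every $\vu\in\C$ has composition $\wbar$, it has weight $\sum\wbar$, so each $\phi(\vu)$ is a binary vector of weight exactly $\sum\wbar$. Thus the image lands in the right weight class automatically.

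The two steps that carry the real content are injectivity of $\phi$ on $\C$ and the distance bound for the image. For the distance, recall from conditions (C1) and (C2) stated earlier that $\C$ having distance $2\sum\wbar-1$ forces $|\supp(\vu)\cap\supp(\vv)|\le 1$ for all distinct $\vu,\vv\in\C$. Translating to supports, this says $|\phi(\vu)\cap\phi(\vv)|\le 1$, i.e. two distinct image vectors of weight $w:=\sum\wbar$ share at most one common nonzero coordinate, which gives binary Hamming distance $d_H(\phi(\vu),\phi(\vv))\ge 2w-2=2\sum\wbar-2$. Injectivity of $\phi$ then follows from the same intersection bound: if $\phi(\vu)=\phi(\vv)$ for distinct $\vu,\vv$, their supports would be equal and of size $w\ge 2$ (we may assume $\sum\wbar\ge 2$, the case $\sum\wbar=1$ being trivial since both sides equal $n$), contradicting $|\supp(\vu)\cap\supp(\vv)|\le 1$. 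Hence $|\phi(\C)|=|\C|$, and $\phi(\C)$ is an $(n,2\sum\wbar-2,\sum\wbar)_2$-code, so
\begin{equation*}
A_q(n,2\sum\wbar-1,\wbar)=|\C|=|\phi(\C)|\le A_2(n,2\sum\wbar-2,\sum\wbar),
\end{equation*}
as desired.

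The main obstacle, such as it is, is ensuring that the intersection condition genuinely yields both injectivity and the claimed distance simultaneously; the subtlety is that the $q$-ary distance is $2\sum\wbar-1$ (odd) while we are only asserting the weaker even distance $2\sum\wbar-2$ on the binary side, so condition (C2) about \emph{differing values} at a shared coordinate plays no role in the binary reduction and may be safely discarded. In other words, the projection to supports loses exactly the one unit of distance that (C2) was responsible for. I expect the entire argument to be short; the only thing to state carefully is that $|\supp(\vu)\cap\supp(\vv)|\le 1$ is an immediate consequence of the hypothesized $q$-ary distance via (C1), and that weight $w\ge 2$ makes equal supports impossible, so that $\phi$ is injective and the bound follows with no case analysis.
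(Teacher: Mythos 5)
Your proof is correct, and it is precisely the argument the paper has in mind: the paper states this lemma without proof, calling it a trivial bound, and the intended justification is exactly your forgetful projection onto supports, where condition (C1) gives the intersection bound $|\supp(\vu)\cap\supp(\vv)|\le 1$, hence binary distance at least $2\sum\wbar-2$ and injectivity of the map. Nothing is missing; your handling of the degenerate case $\sum\wbar=1$ is a fine (if pedantic) touch.
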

\vskip 10pt

Table \ref{codetable} provides the base codewords for quasicyclic optimal codes of  
sufficiently large lengths.
For succinctness, we do not indicate trailing zeros
at the end of each base codeword. Therefore, the base codeword $1203$, say, 
should be interpreted as $12030^{n-4}$. In order to construct these base codewords,
we use either optimal Golomb rulers or a simple computer search to 
establish the best $\lbar$-array corresponding to the codes. 
Table \ref{smallcodetable} includes the sizes of optimal codes with small 
length $n$. These two tables together give an almost complete solution for the sizes of optimal
constant-composition codes of weight at most six.

In Table \ref{smallcodetable}, 
if a cell is empty, then it means that the corresponding size is already determined in 
Table \ref{codetable}. 
The upper bound for the sizes of codes comes from either the Johnson bound or 
Lemma \ref{trivialbound}, whichever is smaller. The lower bounds come from optimal codes constructed by
hand or by a hill-climbing algorithm. 
We refer the interested reader to the Appendix
for a complete description of these optimal codes.
We note that the values of $A_3(n,2(w_1+w_2) -1,\llb w_1,w_2\rrb)$ are included for
completeness although it has been determined earlier
by \"Osterg{\aa}rd and Svanstr\"om \cite[Theorem 8]{OstergardSvanstrom:2002}.



\begin{table}
\setlength{\tabcolsep}{4pt}
\caption{$N{\rm ccc(\overline{w})}$ and bounds on $N_{\rm ccc}(\overline{w})$}
\label{boundstable}
\centering
\begin{tabular}{| c | c | l | c | c |}
\hline
& & & & Bounds on $N_{\rm ccc}(\overline{w})$ \\
Weight & Distance & Composition $\overline{w}$ & $N_{\rm ccc}(\overline{w})$ & from (\ref{ub}) and \\
& & & &  Proposition \ref{lb} \\
\hline
2 & 3 & $\llb 1,1\rrb$ & 3 & $[3,3]$ \\
\hline
3 & 5 & $\llb 2,1\rrb$ & 5 & $[5,17]$  \\
& & $\llb 1,1,1\rrb$ & 7 & $[7,9]$   \\
\hline
4 & 7 & $\llb 3,1\rrb$ & 7 & $[7,55]$   \\
& & $\llb 2,2\rrb$ & 10 & $[10,37]$  \\
& & $\llb 2,1,1\rrb$ & 10 & $[10,37]$   \\
& & $\llb 1,1,1,1\rrb$ & 13 & $[13,19]$   \\
\hline
5 & 9 & $\llb 4,1\rrb$ & 9 & $[9,129]$  \\
& & $\llb 3,2\rrb$ & 14 & $[13,97]$  \\
& & $\llb 3,1,1\rrb$ & 14 & $[13,97]$   \\
& & $\llb 2,2,1\rrb$ & 18 & $[17,65]$   \\
& & $\llb 2,1,1,1\rrb$ & 18 & $[17,65]$  \\
& & $\llb 1,1,1,1,1\rrb$ & 23 & $[21,33]$   \\
\hline
6 & 11 & $\llb 5,1\rrb$ & 11 & $[11,251]$  \\
& & $\llb 4,2\rrb$ & 18 & $[16,201]$  \\
& & $\llb 4,1,1\rrb$ & 18 & $[16,201]$  \\
& & $\llb 3,3\rrb$  & 21 & $[21,151]$  \\
& & $\llb 3,2,1\rrb$ & 21 & $[21,151]$   \\
& & $\llb 3,1,1,1\rrb$ & 21 & $[21,151]$  \\
& & $\llb 2,2,2\rrb$ & 30 & $[26,101]$  \\
& & $\llb 2,2,1,1\rrb$ & 30 & $[26,101]$   \\
& & $\llb 2,1,1,1,1\rrb$ & 30 & $[26,101]$   \\
& & $\llb 1,1,1,1,1,1\rrb$ & $\in[33,35]$ & $[31,51]$  \\
\hline
\end{tabular}
\end{table}

Table \ref{boundstable} gives the exact value of $N_{\rm ccc}(\overline{w})$ for all
$\overline{w}$ such that $\sum\overline{w}\leq 6$, except when
$\overline{w}=\llb 1,1,1,1,1,1\rrb$. We compare these values with bounds on
$N_{\rm ccc}(\overline{w})$ given by (\ref{ub}) and Proposition \ref{lb}. There is a large gap
between these bounds. It would be interesting to close this gap.
 
\section{Conclusion}

The exact sizes of optimal constant-composition and constant-weight codes having linear size
are determined for all such codes of sufficiently large lengths. In the course of establishing
these results, we introduced several new concepts, including that of
generalized difference triangle sets and showed how they
can be constructed from Golomb rulers. The results obtained in this paper
solve an
open problem of Etzion.

\appendix

Only codes of size at least five are listed here. Those optimal codes of size four or less
can be constructed easily by hand.

\setlength{\tabcolsep}{3pt}

\subsection{Weight Four Codes}

\begin{enumerate}[1)]
\item {An Optimal $(10,7,\llb 1,1,1,1\rrb)_5$-code:}

\begin{tabular}{l l l l}
0004021300 &
2103000040 &
0040000132 &
1000204003 \\
0320140000
\end{tabular}

\item {An Optimal $(11,7,\llb 1,1,1,1\rrb)_5$-code:} 

\begin{tabular}{l l l l}
30000200041 &
00100034020 &
20014003000 &
00003040102 \\
01320000004 &
04000301200
\end{tabular}

\item {An Optimal $(12,7,\llb 1,1,1,1\rrb)_5$-code:} 

\begin{tabular}{l l l}
010020043000 &
000200301004 &
120000000403 \\
200040100030 &
400301020000 &
002000430100 \\
003014000002 &
034100000020 &
000002004310
\end{tabular}
\end{enumerate}

\subsection{Weight Five Codes}

\begin{enumerate}[1)]

\item {An Optimal $(15,9,\llb 2,2,1\rrb)_4$-code:} 

\begin{tabular}{l l l}
002100200000103 &
201010003200000 &
000300000122010 \\
000021030010002 &
010002002001300 &
120000120000030
\end{tabular}

\item {An Optimal $(16,9,\llb 2,2,1\rrb)_4$-code:} 

Lengthening of an optimal $(15,9,\llb 2,2,1\rrb)_4$-code.

\item {An Optimal $(17,9,\llb 2,2,1\rrb)_4$-code:} 

\begin{tabular}{l l}
00301002000020010 &
00003210010000200 \\
10000031000200002 &
00020100002100030 \\
20000000123010000 &
00010003200002100 \\
01200020001003000 &
\end{tabular}

\item {An Optimal $(n,9,\llb 2,1,1,1\rrb)_5$-code, $n\in[15,17]$:} 

Refinement of an optimal $(n,9,\llb 2,2,1\rrb)_4$-code, $n\in[15,17]$.

\item {An Optimal $(n,9,\llb 1,1,1,1,1\rrb)_6$-code, $n\in[15,18]$:} 

Refinement of an optimal $(n,9,\llb 2,1,1,1\rrb)_4$-code, $n\in[15,18]$.

\item {An Optimal $(19,9,\llb 1,1,1,1,1\rrb)_6$-code:} 

\begin{tabular}{l l}
0045203000000000010 &
5010020040000000003 \\
0000100050034002000 &
3004000100000205000 \\
0000400000000320501 & 
0100340200500000000 \\
0503000014000000200 &
0000002301040000005 \\
4000001000205000300 &
0000010002003500040 \\
0020000005100034000 &
2300000000010040050
\end{tabular}

\item {An Optimal $(20,9,\llb 1,1,1,1,1\rrb)_6$-code:} 

\begin{tabular}{l l}
00020000500300004010 &
51000003400002000000 \\
00000005040000000132 &
00000350000001002400 \\
02100040003000000050 &
00001034000200050000 \\
00400200100000030005 &
00000010250040300000 \\
04050000000030010200 &
10000000025000043000 \\
20003100000050000040 &
03005000000000201004 \\
30200000000100400500 &
00000000001524000003 \\
00030502004000100000 &
00342000010005000000
\end{tabular}

\item {An Optimal $(22,9,\llb 1,1,1,1,1\rrb)_6$-code:} 

Lengthening of an optimal $(21,9,\llb 1,1,1,1,1\rrb)_6$-code.
\end{enumerate}

\subsection{Weight Six Codes}

\begin{enumerate}[1)]

\item {An Optimal $(20,11,\llb 3,3\rrb)_3$-code:} 

\begin{tabular}{l l}
10000000020201002010 &
00101002001020000020 \\
00022120000100000001 &
00010000202000201100 \\
01000001000002010202 &
\end{tabular}

\item {An Optimal $(20,11,\llb 3,2,1\rrb)_4$-code:}

Refinement of an optimal $(20,11,\llb 3,3\rrb)_3$-code.

\item {An Optimal $(20,11,\llb 3,1,1,1\rrb)_5$-code:} 

Refinement of an optimal $(20,11,\llb 3,3\rrb)_3$-code.

\item {An Optimal $(20,11,\llb 2,2,2\rrb)_4$-code:} 

Refinement of an optimal $(20,11,\llb 4,2\rrb)_3$-code.

\item {An Optimal $(21,11,\llb 2,2,2\rrb)_4$-code:} 

\begin{tabular}{l l}
010000332000100020000 &
033000000021020000010 \\
302010200300000000001 &
000103000210000032000 \\
200001020003000300100 &
000200001000001000323 \\
000020000000213103000 &
\end{tabular}

\item {An Optimal $(22,11,\llb 2,2,2\rrb)_4$-code:} 

Lengthening of an optimal $(21,11,\llb 2,2,2\rrb)_4$-code.

\item {An Optimal $(23,11,\llb 2,2,2\rrb)_4$-code:} 

\begin{tabular}{l l}
10020200000001000033000 &
20000031200100030000000 \\
00000003020000200110030 &
00031020000000000000123 \\
00000000013000013002002 &
01000000300023000200001 \\
00100000001330000020200 &
02302300000000101000000
\end{tabular}

\item {An Optimal $(24,11,\llb 2,2,2\rrb)_4$-code:}

\begin{tabular}{l}
300000100000200300000012 \\
030200201300000000001000 \\
010020030002000103000000 \\
000010300020030000010200 \\
003000012000100020000300 \\
200000000213003010000000 \\
000100000000020031200003 \\
100001020000000000332000 \\
001332000000001000000020 
\end{tabular}

\item {An Optimal $(25,11,\llb 2,2,2\rrb)_4$-code:} 

\begin{tabular}{l}
0000000001223100030000000 \\
0000000100002030003002100 \\
0003001000001003000020002 \\
0030000210000200100000003 \\
1000030020000002010000300 \\
0000100000100000200330200 \\
0101000030300000002000020 \\
3012300002000010000000000 \\
0020003000000000020101030 \\
0300212300010000000000000
\end{tabular}

\item {An Optimal $(27,11,\llb 2,2,2\rrb)_4$-code:}

Lengthening of an optimal $(26,11,\llb 2,2,2\rrb)_4$-code.

\item {An Optimal $(28,11,\llb 2,2,2\rrb)_4$-code:}

\begin{tabular}{l}
1100000000220000300000003000 \\
0000001102000000000030032000 \\
0000110000003003200020000000 \\
0200000001001030003200000000 \\
2010200000000000000300010003 \\
0020020003100100000000300000 \\
0000302000030020000001100000 \\
0000000200302000010000000031 \\
0031003000000002001000000020 \\
3000000010000000022013000000 \\
0002030000010000030100000200 \\
0000000000000301000002001302 \\
0300000030000000000000220110 \\
0003000320000210100000000000
\end{tabular}

\item {An Optimal $(29,11,\llb 2,2,2\rrb)_4$-code:}

Lengthening of an optimal $(28,11,\llb 2,2,2\rrb)_4$-code.

\item {An Optimal $(n,11,\llb 2,2,1,1\rrb)_5$-code, $n\in[20,29]$:}

Refinement of an optimal $(n,11,\llb 2,2,2\rrb)_4$-code, $n\in[20,29]$.

\item {An Optimal $(n,11,\llb 2,1,1,1,1\rrb)_6$-code, $n\in[20,29]$:}

Refinement of an optimal $(n,11,\llb 2,2,2\rrb)_4$-code, $n\in[20,29]$.

\item {An Optimal $(n,11,\llb 1,1,1,1,1,1\rrb)_7$-code, $n\in[20,26]$:}

Refinement of an optimal $(n,11,\llb 2,1,1,1,1\rrb)_6$-code, $n\in[20,26]$.

\item {An Optimal $(27,11,\llb 1,1,1,1,1,1\rrb)_7$-code:}

\begin{tabular}{l}
010000000002003040506000000 \\
001000000200300004650000000 \\
100000000020030400065000000 \\
020000000300000100000405060 \\
002000000030000010000560004 \\
200000000003000001000056400 \\
000030400000000005001000026 \\
000003040000000500100000602 \\
000300004000000060020000510 \\
000004500000610020000003000 \\
000400050000061002000300000 \\
000040005000106200000030000 \\
345126000000000000000000000 \\
000000123456000000000000000
\end{tabular}

\item {An Optimal $(28,11,\llb 1,1,1,1,1,1\rrb)_7$-code:}

Shorten an optimal $(29,11,\llb 1,1,1,1,1,1\rrb)_7$-code.

\item {An Optimal $(29,11,\llb 1,1,1,1,1,1\rrb)_7$-code:}

Shorten an optimal $(30,11,\llb 1,1,1,1,1,1\rrb)_7$-code.

\item {An Optimal $(30,11,\llb 1,1,1,1,1,1\rrb)_7$-code:}

Shorten an optimal $(31,11,\llb 1,1,1,1,1,1\rrb)_7$-code.

\item {An Optimal $(32,11,\llb 1,1,1,1,1,1\rrb)_7$-code:}

Lengthening of an optimal $(31,11,\llb 1,1,1,1,1,1\rrb)_7$-code.
\end{enumerate}

\section*{Authors' Biographies}

{\bf Yeow Meng Chee} (SM'08) received the B.Math. degree in computer science and
combinatorics and optimization and the M.Math. and Ph.D. degrees in computer science,
from the University of Waterloo, Waterloo, ON, Canada, in 1988, 1989, and 1996, respectively.

Currently, he is an Associate Professor at 
the Division of Mathematical Sciences, School of Physical
and Mathematical Sciences, Nanyang Technological University, Singapore.
Prior to this, he was Program Director of Interactive Digital Media R\&D in the
Media Development Authority of Singapore,
Postdoctoral Fellow at the University of Waterloo and
IBM's Z{\"u}rich Research Laboratory, General Manager of the Singapore Computer Emergency
Response Team, and 
Deputy Director of Strategic Programs at the Infocomm Development Authority, Singapore.
His research interest lies in the interplay between combinatorics and computer science/engineering,
particularly combinatorial design theory, coding theory, extremal set systems,
and electronic design automation.

\vskip 10pt

{\bf Son Hoang Dau} received the BachelorÕs degree in Applied Mathematics and Informatics from the College of Science, Vietnam National University, Hanoi, Vietnam, in 2006
and the M.S. degree in mathematical sciences from the Division of Mathematical Sciences,
Nanyang Technological University, Singapore, where he is currently working towards
the Ph.D. degree.

His research interests are coding theory and combinatorics.

\vskip 10pt 

{\bf Alan C. H. Ling} was born in Hong Kong in 1973. He received the B.Math.,
M.Math., and Ph.D. degrees in combinatorics \& optimization from the University
of Waterloo, Waterloo, ON, Canada, in 1994, 1995, and 1996, respectively.
He worked at the Bank of Montreal, Montreal, QC, Canada, and Michigan
Technological University, Houghton, prior to his present position as Associate
Professor of Computer Science at the University of Vermont, Burlington. His research
interests concern combinatorial designs, codes, and applications in computer
science.

\vskip 10pt 

{\bf San Ling} received the B.A. degree in mathematics from the
University of Cambridge, Cambridge, U.K., in 1985
and the Ph.D. degree in mathematics from
the University of California, Berkeley, in 1990.

Since April 2005, he has been a Professor with the
Division of Mathematical Sciences, School of Physical and
Mathematical Sciences, Nanyang Technological University,
Singapore. Prior to that, he was with the Department of
Mathematics, National University of Singapore. His research fields
include arithmetic of modular curves and application of number
theory to combinatorial designs, coding theory, cryptography and
sequences.


\begin{thebibliography}{10}
\providecommand{\url}[1]{#1}
\csname url@rmstyle\endcsname
\providecommand{\newblock}{\relax}
\providecommand{\bibinfo}[2]{#2}
\providecommand\BIBentrySTDinterwordspacing{\spaceskip=0pt\relax}
\providecommand\BIBentryALTinterwordstretchfactor{4}
\providecommand\BIBentryALTinterwordspacing{\spaceskip=\fontdimen2\font plus
\BIBentryALTinterwordstretchfactor\fontdimen3\font minus
  \fontdimen4\font\relax}
\providecommand\BIBforeignlanguage[2]{{%
\expandafter\ifx\csname l@#1\endcsname\relax
\typeout{** WARNING: IEEEtran.bst: No hyphenation pattern has been}%
\typeout{** loaded for the language `#1'. Using the pattern for}%
\typeout{** the default language instead.}%
\else
\language=\csname l@#1\endcsname
\fi
#2}}

\bibitem{MacWilliamsSloane:1977}
F.~J. MacWilliams and N.~J.~A. Sloane, \emph{The Theory of Error-Correcting
  Codes}.\hskip 1em plus 0.5em minus 0.4em\relax Amsterdam: North-Holland
  Publishing Co., 1977.

\bibitem{Brouweretal:1990}
A.~E. Brouwer, J.~B. Shearer, N.~J.~A. Sloane, and W.~D. Smith, ``A new table
  of constant weight codes,'' \emph{IEEE Trans. Inform. Theory}, vol.~36,
  no.~6, pp. 1334--1380, 1990.

\bibitem{Agrelletal:2000}
E.~Agrell, A.~Vardy, and K.~Zeger, ``Upper bounds for constant-weight codes,''
  \emph{IEEE Trans. Inform. Theory}, vol.~46, no.~7, pp. 2373--2395, 2000.

\bibitem{Smithetal:2006}
D.~H. Smith, L.~A. Hughes, and S.~Perkins, ``A new table of constant weight
  codes of length greater than 28,'' \emph{Electron. J. Combin.}, vol.~13, no.
  1, Article \#A2, p. 18 (electronic), 2006.

\bibitem{Svanstrom:2000}
M.~Svanstr{\"o}m, ``Constructions of ternary constant-composition codes with
  weight three,'' \emph{IEEE Trans. Inform. Theory}, vol.~46, no.~7, pp.
  2644--2647, 2000.

\bibitem{OstergardSvanstrom:2002}
P.~R.~J. {\"O}sterg{\aa}rd and M.~Svanstr{\"o}m, ``Ternary constant weight
  codes,'' \emph{Electron. J. Combin.}, vol.~9, no.~1, pp. Research Paper 41,
  23 pp. (electronic), 2002.

\bibitem{Svanstrometal:2002}
M.~Svanstr{\"o}m, P.~R.~J. {\"O}sterg{\aa}rd, and G.~T. Bogdanova, ``Bounds and
  constructions for ternary constant-composition codes,'' \emph{IEEE Trans.
  Inform. Theory}, vol.~48, no.~1, pp. 101--111, 2002.

\bibitem{BogdanovaKapralov:2003}
G.~T. Bogdanova and S.~N. Kapralov, ``Enumeration of optimal ternary codes with
  a given composition,'' \emph{Problemy Peredachi Informatsii}, vol.~39, no.~4,
  pp. 35--40, 2003.

\bibitem{Luoetal:2003}
Y.~Luo, F.-W. Fu, A.~J.~H. Vinck, and W.~Chen, ``On constant-composition codes
  over {$Z\sb q$},'' \emph{IEEE Trans. Inform. Theory}, vol.~49, no.~11, pp.
  3010--3016, 2003.

\bibitem{Chuetal:2004}
W.~Chu, C.~J. Colbourn, and P.~Dukes, ``Constructions for permutation codes in
  powerline communications,'' \emph{Des. Codes Cryptogr.}, vol.~32, no. 1-3,
  pp. 51--64, 2004.

\bibitem{Colbournetal:2004}
C.~J. Colbourn, T.~Kl{\o}ve, and A.~C.~H. Ling, ``Permutation arrays for
  powerline communication and mutually orthogonal {L}atin squares,'' \emph{IEEE
  Trans. Inform. Theory}, vol.~50, no.~6, pp. 1289--1291, 2004.

\bibitem{Chuetal:2005}
W.~Chu, C.~J. Colbourn, and P.~Dukes, ``Tables for constant composition
  codes,'' \emph{J. Combin. Math. Combin. Comput.}, vol.~54, pp. 57--65, 2005.

\bibitem{DingYin:2005a}
C.~Ding and J.~Yin, ``Algebraic constructions of constant composition codes,''
  \emph{IEEE Trans. Inform. Theory}, vol.~51, no.~4, pp. 1585--1589, 2005.

\bibitem{DingYin:2005b}
------, ``Combinatorial constructions of optimal constant-composition codes,''
  \emph{IEEE Trans. Inform. Theory}, vol.~51, no.~10, pp. 3671--3674, 2005.

\bibitem{DingYuan:2005}
C.~Ding and J.~Yuan, ``A family of optimal constant-composition codes,''
  \emph{IEEE Trans. Inform. Theory}, vol.~51, no.~10, pp. 3668--3671, 2005.

\bibitem{DingYin:2006}
C.~Ding and J.~Yin, ``A construction of optimal constant composition codes,''
  \emph{Des. Codes Cryptogr.}, vol.~40, no.~2, pp. 157--165, 2006.

\bibitem{CheeLing:2007}
Y.~M. Chee and S.~Ling, ``Constructions for $q$-ary constant-weight codes,''
  \emph{IEEE Trans. Inform. Theory}, vol.~53, no.~1, pp. 135--146, 2007.

\bibitem{Cheeetal:2007}
Y.~M. Chee, A.~C.~H. Ling, S.~Ling, and H.~Shen, ``The {PBD}-closure of
  constant-composition codes,'' \emph{IEEE Trans. Inform. Theory}, vol.~53,
  no.~8, pp. 2685--2692, 2007.

\bibitem{Cheeetal:2008}
Y.~M. Chee, S.~H. Dau, A.~C.~H. Ling, and S.~Ling, ``The sizes of optimal
  $q$-ary codes of weight three and distance four: a complete solution,''
  \emph{IEEE Trans. Inform. Theory}, vol.~54, no.~3, pp. 1291--1295, 2008.

\bibitem{Cheeetal:2008b}
Y.~M. Chee, G.~Ge, and A.~C.~H. Ling, ``Group divisible codes and their
  application in the construction of optimal constant-composition codes of
  weight three,'' \emph{IEEE Trans. Inform. Theory}, vol.~54, no.~8, pp.
  3552--3564, 2008.

\bibitem{TelatarGallager:1990}
I.~E. Telatar and R.~G. Gallager, ``Zero error decision feedback capacity of
  discrete memoryless channels,'' in \emph{BILCON '90: Proceedings of 1990
  Bilkent International Conference on New Trends in Communication, Control and
  Signal Processing}, E.~Arikan, Ed.\hskip 1em plus 0.5em minus 0.4em\relax
  Elsevier, 1990, pp. 228--233.

\bibitem{Dyachkov:1984}
A.~G. D$'$yachkov, ``Random constant composition codes for multiple access
  channels,'' \emph{Problems Control Inform. Theory/Problemy Upravlen. Teor.
  Inform.}, vol.~13, no.~6, pp. 357--369, 1984.

\bibitem{EricsonZinoviev:1995}
T.~Ericson and V.~Zinoviev, ``Spherical codes generated by binary partitions of
  symmetric pointsets,'' \emph{IEEE Trans. Inform. Theory}, vol.~41, no.~1, pp.
  107--129, 1995.

\bibitem{King:2003}
O.~D. King, ``Bounds for {DNA} codes with constant {GC}-content,''
  \emph{Electron. J. Combin.}, vol.~10, no.~1, pp. Research Paper 33, 13 pp.
  (electronic), 2003.

\bibitem{MilenkovicKashyap:2006}
O.~Milenkovic and N.~Kashyap, ``On the design of codes for {DNA} computing,''
  ser. Lecture Notes in Computer Science, vol. 3969.\hskip 1em plus 0.5em minus
  0.4em\relax Berlin: Springer-Verlag, 2006, pp. 100--119.

\bibitem{CheeLing:2008}
Y.~M. Chee and S.~Ling, ``Improved lower bounds for constant {GC}-content {DNA}
  codes,'' \emph{IEEE Trans. Inform. Theory}, vol.~54, no.~1, pp. 391--394,
  2008.

\bibitem{Chuetal:2006}
W.~Chu, C.~J. Colbourn, and P.~Dukes, ``On constant composition codes,''
  \emph{Discrete Appl. Math.}, vol. 154, no.~6, pp. 912--929, 2006.

\bibitem{CostelloForney:2007}
D.~J. Costello and G.~D. Forney, ``Channel coding: the road to channel
  capacity,'' \emph{Proc. IEEE}, vol.~95, no.~6, pp. 1150--1177, 2007.

\bibitem{Rodl:1985}
V.~R{\"o}dl, ``On a packing and covering problem,'' \emph{European J. Combin.},
  vol.~5, pp. 69--78, 1985.

\bibitem{ErdosHanani:1963}
P.~Erd{\H{o}}s and H.~Hanani, ``On a limit theorem in combinatorial analysis,''
  \emph{Publ. Math. Debrecen}, vol.~10, pp. 10--13, 1963.

\bibitem{Etzion:1997}
T.~Etzion, ``Optimal constant weight codes over {$Z\sb k$} and generalized
  designs,'' \emph{Discrete Math.}, vol. 169, no. 1-3, pp. 55--82, 1997.

\bibitem{Klove:1988}
T.~Kl{\o}ve, ``Bounds on the size of optimal difference triangle sets,''
  \emph{IEEE Trans. Inform. Theory}, vol.~34, no.~2, pp. 355--361, 1988.

\bibitem{Klove:1989}
------, ``Bounds and construction for difference triangle sets,'' \emph{IEEE
  Trans. Inform. Theory}, vol.~35, no.~4, pp. 879--886, 1989.

\bibitem{Babcock:1953}
W.~C. Babcock, ``Intermodulation interference in radio systems,'' \emph{Bell
  System Tech. J.}, vol.~31, pp. 63--73, 1953.

\bibitem{Eckler:1960}
A.~R. Eckler, ``The construction of missile guidance codes resistant to random
  interference,'' \emph{Bell System Tech. J.}, vol.~38, pp. 973--994, 1960.

\bibitem{RobinsonBernstein:1967}
J.~Robinson and A.~Bernstein, ``A class of binary recurrent codes with limited
  error propagation,'' \emph{IEEE Trans. Inform. Theory}, vol.~13, pp.
  106--113, 1967.

\bibitem{Biraudetal:1974}
F.~Biraud, E.~J. Blum, and J.~C. Ribes, ``On optimal synthetic linear arrays
  with applications to radioastronomy,'' \emph{IEEE Trans. Antennas and
  Propagation}, vol.~22, pp. 108--109, 1974.

\bibitem{Blumetal:1975}
E.~J. Blum, J.~C. Ribes, and F.~Biraud, ``Some new possibilities of optimal
  synthetic linear arrays for radioastronomy,'' \emph{Astronom. Astrophys.},
  vol.~41, pp. 409--411, 1975.

\bibitem{FangSandrin:1977}
R.~J.~F. Fang and W.~A. Sandrin, ``Carrier frequency assignment for nonlinear
  repeaters,'' \emph{COMSAT Tech. Rev.}, vol.~7, pp. 227--245, 1977.

\bibitem{Atkinsonetal:1986}
M.~D. Atkinson, N.~Santoro, and J.~Urrutia, ``Integer sets with distinct sums
  and differences and carrier frequency assignment for nonlinear repeaters,''
  \emph{IEEE Trans. Commun.}, vol.~34, pp. 614--617, 1986.

\bibitem{Hanani:1963}
H.~Hanani, ``On some tactical configurations,'' \emph{Canad. J. Math.},
  vol.~15, pp. 702--722, 1963.

\bibitem{Bakeretal:2001}
R.~C. Baker, G.~Harman, and J.~Pintz, ``The difference between consecutive
  primes. {II},'' \emph{Proc. London Math. Soc. (3)}, vol.~83, no.~3, pp.
  532--562, 2001.

\end{thebibliography}
\end{document}